\newcommand\p[1]{\fbox{#1}}
\newcommand{\set}[1]{\left\{#1\right\}}
\newcommand{\ignore}[1]{}
\DeclareMathOperator{\OPT}{OPT}
\DeclareMathOperator{\IB}{IB}
\DeclareMathOperator{\PO}{PO}
\DeclareMathOperator{\NPO}{NPO}
\DeclareMathOperator{\LPO}{LPO}
\definecolor{lightblue}{RGB}{200,200,255}
\definecolor{lightgreen}{RGB}{200,255,200}
\definecolor{darkgreen}{RGB}{100,155,100}
\tikzstyle{every node}=[font=\footnotesize]
\tikzstyle{gate}            = [circle,fill=white,draw=black,minimum size=8pt,minimum height=8pt,inner sep=2pt,font=\small]
\tikzstyle{block}           = [rectangle,fill=white,draw=black,minimum size=15pt,text width=280pt,inner sep=2pt,rounded corners,font=\small]
\tikzstyle{smallblock}      = [rectangle,fill=white,draw=black,minimum size=15pt,text width=80pt,inner sep=2pt,rounded corners,font=\small]
\tikzstyle{datablock}       = [rectangle,fill=lightblue,draw=black,minimum size=15pt,minimum width=200pt,inner sep=2pt,rounded corners,font=\small]
\tikzstyle{ifelse}          = [rectangle,fill=lightgreen,draw=black,minimum height=10pt,minimum width=200pt,inner sep=2pt,rounded corners,font=\small]
\tikzstyle{wire}            = [draw,thick,->]
\tikzstyle{ewire}           = [draw,thick]
\tikzstyle{edgelabel}       = [draw=white,fill=white,circle,inner sep=0pt]
\tikzstyle{enode}           = [rectangle,inner sep=2pt,rounded corners]
\newcommand{\tikzsimul}[2]{
\begin{tikzpicture}[scale=0.65]
\begin{axis}[xlabel={#2},ymin=0.15,ymax=1.1,height=5cm,width=8cm, scaled x ticks=true,
legend style={legend columns=4,at={(0.01,0.22)},anchor=north west,font=\footnotesize,draw=none},
ytick={0.2,0.4,0.6,0.8,1.0},
yticklabels={0.2,0.4,0.6,0.8,1.0},
x tick label style={font=\footnotesize}, y tick label style={font=\footnotesize} ]
#1
\legend{OPT*, PO, LPO, NPO}
\end{axis}
\end{tikzpicture}
}
\newcommand{\tikzsimulrightlegend}[2]{
\begin{tikzpicture}[scale=0.65]
\begin{axis}[xlabel={#2},ymin=0.15,ymax=1.1,height=5cm,width=8cm, scaled x ticks=true,
legend style={legend columns=2,at={(0.45,0.57)},anchor=north west,font=\footnotesize,draw=none},
ytick={0.2,0.4,0.6,0.8,1.0},
yticklabels={0.2,0.4,0.6,0.8,1.0},
x tick label style={font=\footnotesize}, y tick label style={font=\footnotesize} ]
#1
\legend{OPT*, PO, LPO, NPO}
\end{axis}
\end{tikzpicture}
}
\newcommand{\tikzsimulrightbottomlegend}[2]{
\begin{tikzpicture}[scale=0.65]
\begin{axis}[xlabel={#2},ymin=0.15,ymax=1.1,height=5cm,width=8cm, scaled x ticks=true,
legend style={legend columns=2,at={(0.45,0.37)},anchor=north west,font=\footnotesize,draw=none},
ytick={0.2,0.4,0.6,0.8,1.0},
yticklabels={0.2,0.4,0.6,0.8,1.0},
x tick label style={font=\footnotesize}, y tick label style={font=\footnotesize} ]
#1
\legend{OPT*, PO, LPO, NPO}
\end{axis}
\end{tikzpicture}
}
\title{FIFO Queueing Policies for Packets with Heterogeneous Processing}
\author{Kirill Kogan\inst{1}, Alejandro L\'opez-Ortiz\inst{1}, Sergey I. Nikolenko\inst{2,3}\thanks{Work of S.I.~Nikolenko, A.V.~Sirotkin, and D.~Tugaryov was supported by the Russian Fund for Basic Research grant 12-01-00450-a, the Russian Presidential Grant Programme for Young Ph.D.'s, grant no.~MK-6628.2012.1, for Leading Scientific Schools, grant no.~NSh-3229.2012.1, and RFBR grants~11-01-12135-ofi-m-2011 and 11-01-00760-a.}, Alexander V. Sirotkin\inst{4,3}${}^\star$ and Denis Tugaryov\inst{3}${}^\star$ }
\institute{
School of Computer Science, University of Waterloo
\and
Steklov Mathematical Institute, nab. r. Fontanka, 27, St. Petersburg, Russia
\and
St. Petersburg Academic University, ul. Khlopina, 8, korp. 3, St. Petersburg, Russia,
\and
St. Petersburg Institute for Informatics and Automation of the RAS, 14 Line VO, 39, St. Petersburg, Russia
}
\begin{document}

\maketitle

\begin{abstract}
We consider the problem of managing a bounded size First-In-First-Out (FIFO) queue
buffer, where each incoming unit-sized packet requires several rounds of processing before it can be transmitted out.
Our objective is to maximize the total number of successfully transmitted packets. We consider both push-out
(when the policy is permitted to drop already admitted packets) and non-push-out cases.
In particular, we provide analytical guarantees for the throughput performance of our algorithms.
We further conduct a comprehensive simulation study which
experimentally validates the predicted theoretical behaviour.

{{\bf Keywords}: Scheduling, Buffer Management, First-In-First-Out Queueing, Switches, Online Algorithms, Competitive Analysis.}
\end{abstract}

\section{Introduction}
\label{sec_introduction}

This work is mostly motivated by buffer management problems within Network Processors (NPs) in a packet-switched network.
Such NPs are responsible for complex packet processing tasks in modern high-speed routers,
including, to name just a few, forwarding, classification, protocol conversion, and intrusion detection.
Common NPs usually rely on multi-core architectures, where multiple cores perform various processing tasks required by the arriving traffic. Such architectures may be based on a pipeline of cores~\cite{xelerated}, a pool of identical cores~\cite{cavium,amcc,cisco}, or a hybrid pool pipeline~\cite{ezchip}.
In response to operator demands, packet processing needs are becoming more heterogeneous, as NPs need to cope with more complex tasks such as advanced VPN services and hierarchical classification for QoS, among others. Unlike general purpose processors, modern NPs employ run-to-completion processing. Recent results in data path
provisioning provide a possibility to have information about future required processing \emph{a priori} (for instance, this is possible in one of the modes of
the OpenFlow protocol \cite{OF}). In this work, we consider a model that captures the characteristics of this architecture. We evaluate the performance of such systems
for the case when information about required processing is available \emph{a priori}. The main concern in this setting is to maximize the throughput attainable by the NP, measured by the total number of packets successfully processed by the system.

In what follows, we adopt the terminology used to describe buffer management problems. We focus our attention on a general model where we are
required to manage admission control and scheduling modules of a single bounded size queue that process packets in First-In-First-Out order.
In this model, arriving traffic consists of unit-sized {\em packets}, and each packet has a {\em processing requirement} (in processor cycles).
A packet is successfully {\em transmitted} once the scheduling module has scheduled the packet for processing for at least its required
number of cycles.
If a packet is dropped upon arrival or pushed out from the queue after being admitted due to admission control policy considerations (if push-out is allowed), then the packet is lost without gain to the algorithm's throughput.

\subsection {Our Contributions}
\label{sec:our_contribution}
In this paper, we consider the problem of managing a FIFO queue buffer of size $B$, where each incoming unit-sized packet requires at most $k$ rounds of processing before it can be transmitted out. Our objective is to maximize the total number of successfully transmitted packets.
For online settings, we propose algorithms with provable performance guarantees.
We consider both push-out (when the algorithm can drop a packet from the queue) and non-push-out cases.
We show that the competitive ratio obtained by our algorithms depends on the maximum number of processing cycles required by a packet.
However, none of our algorithms needs to know the maximum number of processing cycles in advance.
We discuss the non-push-out case in Section~\ref{sec:simple} and show that the on-line greedy algorithm $\NPO$ is $k$-competitive, and
that this bound is tight.
For the push-out case, we consider two algorithms: a simple greedy algorithm $\PO$ that in the case of congestion pushes out the first packet with maximal required processing, and Lazy-Push-Out ($\LPO$) algorithm that mimics $\PO$ but does not transmit packets if there is still at least one admitted packet with more than one required processing cycle. Intuitively, it seems that $\PO$ should outperform $\LPO$ since $\PO$ tends to empty its buffer faster but we demonstrate that these algorithms are not comparable in the worst case. Although we provide a lower bound of $\PO$, the main result of this paper deals with the competitiveness of $\LPO$. In particular, we demonstrate that $\LPO$ is at most $\left(\ln k + 3 + \frac{o(B)}{B}\right)$-competitive. In addition, we demonstrate several lower bounds on the competitiveness of both $\PO$ and $\LPO$ for different values of $B$ and $k$. These results are presented in Section~\ref{sec:preemptive}. The competitiveness result of $\LPO$ is interesting in itself but since ``lazy'' algorithms provide a well-defined accounting infrastructure we hope that a similar approach can be applied to other systems in
 similar settings. From an implementation point of view we can define a new on-line algorithm that will emulate the behaviour of $\LPO$ and will not delay the transmission of processed packets. In Section~\ref{sec:simulation} we conduct a comprehensive simulation study to experimentally verify the performance of the proposed algorithms.
All proofs not appearing in the body of the paper can be found in the Appendix.

\subsection{Related Work}
\label{sec:related_work}

Keslassy et al.~\cite{KKSS+11} were the first to consider buffer
management and scheduling in the context of network processors with heterogeneous processing requirements for the arriving traffic.
They study both SRPT (shortest remaining processing time) and FIFO (first-in-first-out) schedulers with recycles,
in both push-out and non-push-out buffer management cases, where a packet is recycled after
processing according to the priority policy (FIFO or SRPT).
They showed competitive algorithms and worst-case lower bounds for such settings.
Although they considered a different architecture (FIFO with recycles) than the one we consider in this paper, they provided only a lower bound  for the push-out FIFO case, and it remains unknown if it can be attained.

Kogan et al.~\cite{KLSS+12} considered priority-based buffer management and scheduling in both push-out and non-push-out settings for heterogeneous packet sizes.
Specifically, they consider two priority queueing schemes: (i) Shortest Remaining Processing Time first (SRPT) and (ii) Longest Packet first (LP). They present
competitive buffer management algorithms for these schemes and provide lower bounds on the performance of algorithms for such priority queues.

The work of Keslassy et al.~\cite{KKSS+11} and Kogan et al.~\cite{KLSS+12}, as well as our current
work, can be viewed as part of a larger research effort concentrated on
studying competitive algorithms with buffer management for bounded buffers (see, e.g., a recent survey by
Goldwasser~\cite{G+10} which provides an excellent overview of this
field). This line of research, initiated in~\cite{MPL,KLMPSS-04},
has received tremendous attention in the past decade. 

Various models have been proposed and studied, including, among others, QoS-oriented models where packets have weights~\cite{MPL,KLMPSS-04,AielloMRR05,englert09lower} and
models where packets have dependencies~\cite{kesselman09competitive,mansour11overflow}.
A related field that has received much attention in recent years focuses on various
switch architectures and aims at designing competitive algorithms
for such multi-queue scenarios; see, e.g.,~\cite{AS+05,AR+06,AL+06,KKM+08,KKM+10}. Some other works also
provide experimental studies of these algorithms and further
validate their performance \cite{AJ+10}.

There is a long history of OS scheduling for multithreaded processors which is relevant to our research.
For instance, the SRPT algorithm has been studied extensively in such systems, and it is well known to be
optimal with respect to the mean response~\cite{scharge68proof}.
Additional objectives, models, and algorithms have been studied extensively in this context~\cite{LR+97,MRSG+05,MPT+94}. For a comprehensive overview of competitive online scheduling for server systems, see a survey by Pruhs \cite{K+07}.
When comparing this body of research with our proposed framework, one should note that OS scheduling is mostly concerned with average response time, but we focus on estimation of the throughput. Furthermore, OS scheduling does not allow jobs to be dropped, which is an inherent aspect of our proposed model since we have a limited-size buffer.

The model considered in our work is also closely related to job-shop scheduling problems~\cite{brucker06jobshop}, most notably
to hybrid flow-shop scheduling~\cite{ruiz10hybrid} in scenarios where machines have bounded buffers but are not allowed to drop and push out tasks.

\subsection{Model Description}

We consider a buffer with bounded capacity $B$ that handles
the arrival of a sequence of unit-sized packets. Each arriving packet $p$ is
branded with the number of required processing cycles $r(p) \in \set{1,\ldots,k}$.
This number is known for every arriving packet; for a motivation of why such information may be available see~\cite{WP+03}. Although the
value of $k$ will play a fundamental role in our analysis, we note that our algorithms need not know $k$ in
advance. In what follows, we adopt the terminology used in \cite{KLSS+12}. The queue performs two main tasks, namely {\em buffer
management}, which handles admission control of newly arrived
packets and push-out of currently stored packets, and {\em
scheduling}, which decides which of the currently stored packets
will be scheduled for processing. The scheduler will be determined
by the FIFO order employed by the queue. Our framework
assumes a multi-core environment, where we have $C$ processors, and
at most $C$ packets may be chosen for processing at any given
time. However, for simplicity, in the remainder of this paper we
assume the system selects a single packet for processing at any
given time (i.e., $C=1$).
This simple setting suffices to show both
the difficulties of the model and our algorithmic scheme.
We assume discrete slotted time, where each time slot $t$ consists of three phases:
\begin{enumerate}[(i)]
\item {\em arrival}: new packets arrive, and the buffer management unit performs admission control and, possibly, push-out;
\item {\em assignment and processing}: a single packet is selected for processing by the scheduling module;
\item {\em transmission}: packets with zero required processing left are transmitted and leave the queue.
\end{enumerate}

\begin{figure}
\begin{center}
\includegraphics[scale=0.76]{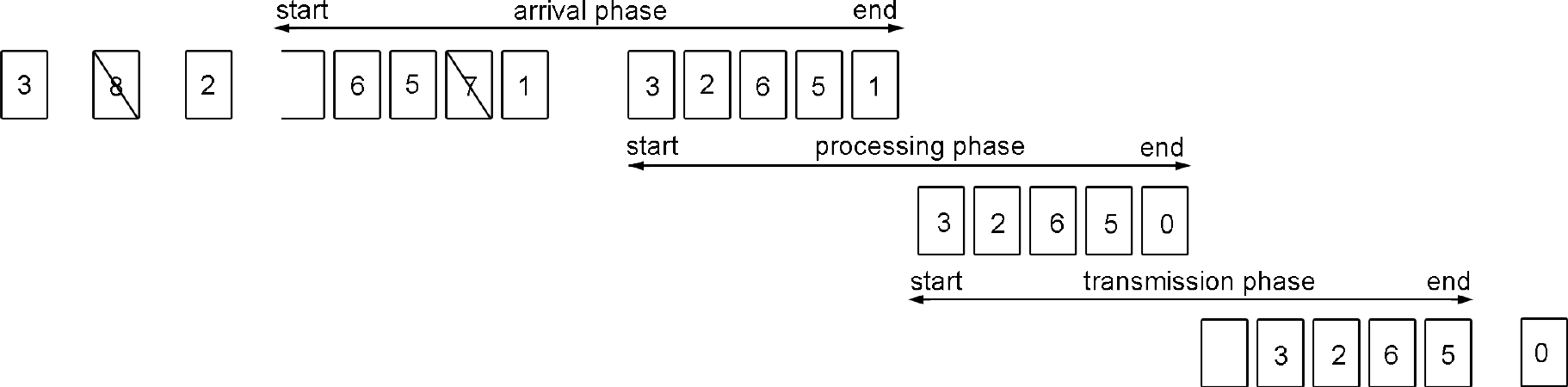}
\end{center}
\caption{Zoom in on a single time slot for greedy, push-out, and work-conserving algorithm.}
\label{fig:model}
\vspace{-0.5cm}
\end{figure}

If a packet is dropped prior to being {\em transmitted} (i.e., while it still has a positive number of
required processing cycles), it is lost. Note that a packet may be dropped either upon arrival
or due to a push-out decision while it is stored in the buffer. A packet contributes one unit to the
objective function only upon being successfully transmitted. The goal is to devise buffer management algorithms
that maximize the overall throughput, i.e., the overall number of packets transmitted from the queue.

We define a {\em greedy} buffer management policy as a
policy that accepts all arrivals if there is available
buffer space in the queue. A policy is {\em work-conserving} if it always processes whenever it has admitted packets
that require processing in the queue.

We say that an arriving packet $p$ {\em pushes out} a
packet $q$ that has already been accepted into the buffer
iff $q$ is dropped in order to free buffer space for $p$, and $p$ is admitted to the buffer instead in FIFO order.
A buffer management policy is called a {\em push-out} policy whenever it allows packets to push out currently stored packets. Figure~\ref{fig:model} shows a sample time slot in our model (for greedy and push-out case).

For an algorithm $ALG$ and a time slot $t$, we denote the set of packets stored in $ALG$'s buffer at time $t$ by $\IB^{ALG}_t$.

The number of {\em processing cycles} of a packet is key to
our algorithms. Formally, for every time slot $t$ and
every packet $p$ currently stored in the queue, its number of {\em residual
processing cycles}, denoted $r_t(p)$, is defined to be the number of
processing cycles it requires before it can be successfully
transmitted.

Our goal is to provide performance guarantees for various buffer management algorithms. We use competitive analysis~\cite{ST+85,Borodin-ElYaniv} when evaluating performance guarantees provided by our online algorithms. An algorithm $ALG$ is said to be {\em $\alpha$-competitive} (for some $\alpha \geq 1$) if for any arrival sequence $\sigma$ the number of packets successfully transmitted by $ALG$ is at least $1/\alpha$ times the number of packets successfully transmitted by an optimal
solution (denoted OPT) obtained by an offline clairvoyant algorithm.

\subsection{Proposed Algorithms}
\begin{algorithm}{}
\normalsize
\caption{{\sc $\NPO$}($p$): Buffer Management Policy}\label{alg:npo}
\begin{algorithmic}[1]
    \If {buffer occupancy is less than $B$}
        \State accept $p$
    \Else
        \State drop $p$
    \EndIf
\end{algorithmic}
\end{algorithm}

\vspace{-3mm}

\begin{algorithm}{}
\normalsize
\caption{{\sc $\PO$}($p$): Buffer Management Policy}
\label{alg:po}
\begin{algorithmic}[1]
   \If {buffer occupancy is less than $B$}
     \State accept $p$
   \Else
     \State let $q$ be the first (from HOL) packet with maximal number of residual processing
     \If {$r_t(p)<r_t(q)$}
       \State drop $q$ and accept $p$ according to FIFO order
     \EndIf
   \EndIf
\end{algorithmic}
\end{algorithm}
Next we define both push-out and non-push-out algorithms.
The Non-Push-Out Algorithm ($\NPO$) is a simple greedy work-conserving policy that accepts a packet if there is enough available buffer space.  Already admitted packets are processed in First-In-First-Out order. If during arrivals $\NPO$'s buffer is full then any arriving packet is dropped even if it has less processing required than a packet already admitted to $\NPO$'s buffer (see Algorithm~\ref{alg:npo}).

Next we introduce two push-out algorithms. The Push-Out Algorithm ($\PO$) is also greedy and work-conserving, but now, if an arriving packet $p$ requires less processing cycles than at least one packet in its buffer, then $\PO$ pushes out the first packet with the maximal number of processing cycles in its buffer and accepts $p$ according to FIFO order (see Algorithm~\ref{alg:po}).
The second algorithm is a new Lazy-Push-Out algorithm $\LPO$ that mimics the behaviour of $\PO$ with two important differences:
\begin{inparaenum}[(i)]
\item $\LPO$ does not transmit a Head-Of-Line packet with a single processing cycle if its buffer contains at least one packet
with more than one residual processing cycle, until the buffer contains only packets with a single residual processing cycle;
\item once all packets in $\LPO$'s buffer (say there are $m$ packets there) have a single processing cycle remaining,
$\LPO$ transmits them over the next $m/C$ processing cycles where $C$ is the number of processing cores;
observe that during this time, if an arriving packet $p$ requires less processing than the first packet $q$ with maximal number of processing cycles
in $\LPO$'s buffer, $p$ pushes out $q$ (similarly to $\PO$).
\end{inparaenum}

Intuitively, $\LPO$ is a weakened version of $\PO$ since $\PO$ tends to empty its buffer faster. Simulations also support this view (see Section~\ref{sec:simulation}).
However, Theorem~\ref{l:PO<LPO} shows that $\LPO$ and $\PO$ are incomparable in the worst case.

\begin{theorem}
\label{l:PO<LPO} 
\begin{inparaenum}[(1)]
\item There exists a sequence of inputs on which $\PO$ processes $\ge\frac{3}{2}$ times more packets than $\LPO$.
\item There exists a sequence of inputs on which $\LPO$ processes $\ge\frac{5}{4}$ times more packets than $\PO$.
\end{inparaenum}
\end{theorem}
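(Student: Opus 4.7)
The plan is to prove both parts by exhibiting explicit adversarial input sequences with small $B$ and $k$ and simulating $\PO$ and $\LPO$ slot by slot.

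For part~(1), I would use $B=2$, $k=2$, and the schedule: at $t=1$ two packets $p_1,p_2$ with $r(p_1)=1$ and $r(p_2)=2$, and at $t=2$ two packets $p_3,p_4$ each with $r=1$. $\PO$ transmits $p_1$ at $t=1$ and so has room at $t=2$ to admit $p_3$; it then pushes out $p_2$ to admit $p_4$, eventually transmitting $p_1,p_3,p_4$ for a total of $3$. $\LPO$ is forbidden by rule~(i) from completing $p_1$ at $t=1$ while $p_2$ has residual exceeding $1$; it instead reduces $p_2$ to residual $1$, after which both buffered packets have residual $1$ and the buffer is full. At $t=2$ the arrivals $p_3,p_4$ both have residual $1$, which is not strictly less than the maximum residual in $\LPO$'s buffer, so no push-out is available and both are dropped. $\LPO$ eventually transmits only $p_1$ and $p_2$, giving the ratio $3/2$.

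For part~(2), I would again take $B=2$, $k=2$, and the schedule: at $t=1$, $p_1$ with $r=1$ and $p_2$ with $r=2$; at $t=2$, $p_3$ with $r=2$; nothing at $t=3$; and a single $r=1$ packet at each of $t=4,5,6$. Here $\PO$'s eagerness works against it: transmitting $p_1$ at $t=1$ frees a slot that $\PO$ fills with $p_3$ at $t=2$; after transmitting $p_2$ at $t=3$, $\PO$ processes $p_3$ for one cycle (down to residual $1$) and admits $p_4$ at $t=4$, leaving at $t=5$ a full buffer with both residuals equal to $1$, so that $p_5$'s arrival cannot push anything out and is dropped. $\LPO$, by contrast, drops $p_3$ on arrival at $t=2$ (its buffer is saturated with $r=1$ packets and $p_3$'s residual exceeds the buffer's maximum, so no push-out is possible), empties its buffer by $t=3$, and then admits and transmits each of $p_4,p_5,p_6$ without loss. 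Stopping the sequence at the end of $t=6$, $\PO$ has transmitted $p_1,p_2,p_3,p_4$ while $\LPO$ has transmitted $p_1,p_2,p_4,p_5,p_6$, matching the $5/4$ claim.

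The main obstacle in each case is identifying the right arrival schedule; after that, the verification is a careful slot-by-slot simulation. The conceptual content is that the two algorithms occupy their buffer slots with different packets at critical moments: $\LPO$'s held $r=1$ packets sometimes block a beneficial admission, as in part~(1), but they also sometimes prevent $\PO$ from squandering processing cycles on a long packet that then blocks the buffer against later arrivals, as in part~(2). This local asymmetry is what makes the two algorithms incomparable in the worst case.
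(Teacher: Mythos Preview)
Your construction for part~(1) is correct and in fact tidier than the paper's: the paper uses a burst of $B$ packets with work $2$ at time $1$ followed by a burst of $B$ packets with work $1$ at time $B$, whereas your $B=2$ example already achieves the exact ratio $3/2$ with both buffers empty at the end, so it repeats just as well.

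Part~(2), however, has a genuine error in the simulation of $\PO$. At the arrival phase of $t=5$ your $\PO$ buffer is $[p_3(1),p_4(1)]$, so $p_5$ is indeed dropped; but then in the processing and transmission phases of $t=5$, $\PO$ finishes and transmits $p_3$, leaving the buffer as $[p_4(1)]$ with one free slot. Consequently at the arrival phase of $t=6$ there is room for $p_6$, and $\PO$ accepts it. Over $t=6,7$, $\PO$ transmits $p_4$ and $p_6$, so $\PO$'s final total is $5$ (namely $p_1,p_2,p_3,p_4,p_6$), not $4$. Both algorithms transmit $5$ packets on your sequence, and the claimed $5/4$ ratio is not achieved.

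The idea is sound and the fix is minor: have \emph{both} $p_5$ and $p_6$ arrive at $t=5$ (and nothing at $t=6$). Then $\PO$'s full buffer $[p_3(1),p_4(1)]$ rejects both arrivals at $t=5$, and $\PO$ finishes with only $p_1,p_2,p_3,p_4$ for a total of $4$; $\LPO$'s buffer is empty at the start of $t=5$, so it accepts both and finishes with $p_1,p_2,p_4,p_5,p_6$ for a total of $5$. Both buffers are empty at $t=6$, so the sequence can be repeated. This corrected instance is still simpler than the paper's part~(2) construction, which uses three separate bursts of sizes $B$, $B/2$, and $B$ together with a packet of large work $k>B/2$ to force a longer misalignment between the two buffers.
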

\begin{proof}
To see (1), consider two bursts of $B$ packets:
\begin{itemize}
\item first burst of $B$ packets with required work $2$ arriving at time slot $t=1$;
\item second burst of $B$ packets with required work $1$ arriving at time slot $t=B$.
\end{itemize}
By the time the second burst arrives, $\LPO$ has processed no packets, while $\PO$ has processed $\frac B2$ packets.
Then both algorithms process $B$ packets of required work $1$ over the next $B$ time slots.
Since we have arrived at a state where both algorithms have empty buffers, we can repeat the procedure, getting an asymptotic bound.

To prove (2), suppose for simplicity that $k>\frac B2$.
The following table demonstrates the sequence of arrivals and the execution of both algorithms
($\# ALG$ denotes the number of packets processed by $ALG$ up to this time; no packets arrive during time slots not shown in the table).

\begin{tabular}{l|c|r|c|r|c}
$t$ & Arriving & $\IB^{\LPO}_t$ & \# $\LPO$ & $\IB^{\PO}_t$ & \# $\PO$ \\\hline
$1$ & $\p{2}\times B$ & $\p{2}$\ldots $\p{2}$ & $0$ & $\p{2}$\ldots $\p{2}$ & $0$ \\
$B$ & $\p{k}\times \frac B2$ & $\p{1}$\ldots $\p{1}$ & $0$ & $\p{k}$\ldots $\p{k}$ $\p{2}$\ldots $\p{2}$ & $B/2$ \\
$2B-1$ & none & $\p{1}$ & $B-1$ & $\p{k}$\ldots $\p{k}$ $\p{1}$ & $B-1$ \\
$2B$ & $\p{1}\times\frac B2$ & $\p{1}$\ldots $\p{1}$ & $B$ & $\p{1}$\ldots $\p{1}$ $\p{k}$\ldots $\p{k}$ & $B$ \\
$\frac{5B}2-1$ & none & $\p{1}$ & $3B/2-1$ & $\p{1}$\ldots $\p{1}$ $\p{k}$\ldots $\p{k-B/2+1}$ & $B$ \\
$\frac{5B}2$ & $\p{1}\times B$ & $\p{1}$ \ldots $\p{1}$ & $3B/2$ & $\p{1}$\ldots $\p{1}$ & $B$ \\
$\frac{7B}2$ & none & $\emptyset$ & $5B/2$ & $\emptyset$ & $2B$ \\
\end{tabular}

Similar to (1), we can repeat this sequence.
\qed
\end{proof}

$\LPO$ is an online push-out algorithm that obeys the FIFO ordering model, so its competitiveness is an interesting result by itself. But we believe this type of algorithms to be a rather promising direction for further study since they provide a well-defined accounting infrastructure that can be used for system analysis in different settings. From an implementation point of view we can define a new on-line algorithm that will emulate the behaviour of $\LPO$ but will not delay the transmission of processed packets. Observe that such an algorithm is not greedy. Although we will briefly discuss the competitiveness of an $\NPO$ policy and lower
bounds for $\PO$, in what follows $\NPO$ and $\PO$ will be mostly used as a reference for the simulation study.

\section{Competitiveness of the Non-Push-Out Policy}
\label{sec:simple}

The following theorem provides a tight bound on the worst-case performance of $\NPO$; its proof is given in the Appendix.

\begin{theorem}
\label{thm:NPO-tight-bound}
\begin{enumerate}[(1)]
\item For a sufficiently long arrival sequence, the competitiveness of $\NPO$ is at least $k$.
\item For a sufficiently long arrival sequence, the competitiveness of $\NPO$ is at most $k$.
\end{enumerate}
\end{theorem}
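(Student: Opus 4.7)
For part (1), my plan is to construct a periodic input that keeps $\NPO$'s buffer permanently full of maximum-weight packets while $\OPT$ enjoys an unobstructed stream of light packets. At $t=0$ I inject $B$ packets each requiring $k$ cycles, all admitted by the greedy $\NPO$. To prevent the buffer from ever draining, I inject one further $k$-packet at times $k,2k,3k,\ldots$, exactly refilling each slot freed by an $\NPO$-completion. In parallel a single unit-processing packet arrives at every slot $t\ge 1$. $\NPO$ drops every $1$-packet (its buffer is always full of $k$-packets) and completes exactly one packet per $k$ slots. $\OPT$ instead rejects every $k$-packet and transmits each $1$-packet in its arrival slot, so that over $T$ slots the ratio of throughputs tends to $T/(T/k)=k$.

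For part (2), my plan is a work-accounting argument. Let $W_{\NPO}$ and $W_{\OPT}$ be the total processing cycles consumed by the two algorithms, and let $\NPO$ and $\OPT$ denote their throughputs. Since $\NPO$ is non-push-out and $r(p)\le k$, one has $W_{\NPO}=\sum_{p\text{ transmitted by }\NPO} r(p)\le k\cdot\NPO$; since $C=1$ and each $\OPT$-transmitted packet needs at least one cycle, $\OPT\le W_{\OPT}$. If I can bound $W_{\OPT}\le W_{\NPO}+c$ for a constant $c$ depending only on $B$ and $k$, chaining the three inequalities gives $\OPT\le k\cdot\NPO+c$ and hence asymptotic $k$-competitiveness.

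The key to bounding $W_{\OPT}-W_{\NPO}$ is the greedy, work-conserving, non-push-out nature of $\NPO$. At any slot in which $\NPO$ is idle its buffer is empty, and because $\NPO$ would have admitted any fresh arrival, no packet can have arrived in that slot. Consequently $\OPT$ can be busy during an $\NPO$-idle slot only while processing a packet already held in $\IB^{\OPT}_t$ at the start of the idle interval. Since $|\IB^{\OPT}_t|\le B$ and each packet there carries at most $k$ residual cycles, the extra $\OPT$-work per idle interval is at most $kB$. Amortizing this against admissions $\OPT$ made during the preceding $\NPO$-busy interval should give the desired absolute additive correction.

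I expect the hardest step to be precisely this amortization. A naive per-interval bound of $kB$ multiplied by the number of busy-to-idle transitions yields only a $(k+B)$-competitive guarantee; obtaining $k$-competitiveness requires matching each cycle $\OPT$ spends during an $\NPO$-idle slot to a \emph{distinct} earlier event (e.g.\ an admission or an $\NPO$ processing cycle) so that the aggregate extra work stays bounded by a function of $B$ and $k$ alone. A potential-function argument tracking $\sum_{p\in\IB^{\OPT}_t} r_t(p)$, or the running gap between the cumulative work done by $\NPO$ and $\OPT$, looks like the natural vehicle for carrying this through.
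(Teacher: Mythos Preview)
Your construction for part (1) is correct and essentially the same as the paper's: fill the buffer with $k$-packets, keep it full with periodic $k$-packet refills, and let $\OPT$ pick off the unit packets that $\NPO$ is forced to drop.

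For part (2), however, the plan has a genuine gap. The target inequality $W_{\OPT}\le W_{\NPO}+c$ (with $c$ depending only on $B,k$) is simply false, so no amortization or potential function will rescue it. Take $B=1$, $k=2$, and at every even slot $t=0,2,4,\ldots$ let a $\p{1}$ arrive followed by a $\p{2}$; odd slots have no arrivals. $\NPO$ admits the $\p{1}$, finishes it in that slot, and is idle in the following odd slot; $\OPT$ admits the $\p{2}$ and spends both slots on it. After $2m$ slots, $W_{\NPO}=m$ while $W_{\OPT}=2m$, so $W_{\OPT}-W_{\NPO}=m\to\infty$. The chain $\OPT\le W_{\OPT}\le W_{\NPO}+c\le k\,\NPO+c$ therefore breaks at the middle step, even though in this example the outer bound $\OPT\le k\,\NPO$ holds trivially (both transmit $m$). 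The slack you are losing is in the very first inequality $\OPT\le W_{\OPT}$: $\OPT$ may spend many cycles per packet, and routing the argument through total work throws that away.

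The paper bypasses work entirely and argues directly on packet counts with a rate comparison. While $\NPO$'s buffer is nonempty, its HOL packet completes within $k$ slots, so over any busy stretch of length $t$ $\NPO$ transmits at least $\lfloor t/k\rfloor$ packets while $\OPT$ transmits at most $t$; this already gives ratio $k$. When $\NPO$ goes idle, the paper observes that once $\NPO$'s buffer dropped below $B$ no arrivals could have been rejected thereafter, so in the tail $\NPO$ still transmits at least $B-1$ packets while $\OPT$ can transmit at most $B$. Summing the busy-phase and tail contributions gives $\OPT\le t+B$ against $\NPO\ge \lfloor t/k\rfloor + B-1$, hence asymptotic $k$-competitiveness. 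If you want to salvage your outline, replace the work quantities by packet counts: bound, per $\NPO$-busy interval of length $L$, the packets $\OPT$ transmits during that interval and the subsequent idle interval by $L+B$, and compare to the $\ge \lceil L/k\rceil$ packets $\NPO$ transmits there.
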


As demonstrated by the above results, the simplicity of
non-push-out greedy policies does have its price. In the
following sections we explore the benefits of introducing
push-out policies and provide an analysis of their performance.

\section{Competitiveness of Push-Out Policies}
\label{sec:preemptive}
In this section, we show lower bounds on the competitive ratio of $\PO$ and $\LPO$ algorithms and prove an upper bound for $\LPO$.

\subsection{Lower bounds}
In this part we consider lower bounds on the competitive ratio of $\PO$ and $\LPO$ for different values of $k$ and $B$.
Proofs of Theorems~\ref{thm:lb1} and~\ref{thm:po-lower-general} are given in the Appendix.
\begin{theorem}
\label{thm:lb1}
The competitive ratio of both $\LPO$ and $\PO$ is at least $2\left(1-\frac1B\right)$ for $k\ge B$.
The competitive ratio for $k < B$ is at least $\frac{2k}{k+1}$ for $\PO$ and at least $\frac{2k-1}{k}$ for $\LPO$.
\end{theorem}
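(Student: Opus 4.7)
My plan is to establish each bound by exhibiting an arrival sequence $\sigma$ on which the ratio $\OPT(\sigma)/A(\sigma)$ meets the claimed lower bound, where $A$ is either $\PO$ or $\LPO$. The general template is to send a first burst of ``hard'' packets of large processing requirement at time $1$, followed by a later burst of work-$1$ packets whose timing triggers the push-out rule in a wasteful way; the construction is then repeated so that the ratio holds asymptotically rather than as a one-shot artefact.

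For the $k\ge B$ case, I would send $B$ packets of processing requirement $k$ at $t=1$. After a carefully chosen number of processing slots, $\PO$ and $\LPO$ have invested work in the head-of-line packet (its residual lies strictly between $1$ and $k-1$), but by FIFO order they cannot yet transmit it. A second burst of work-$1$ packets arriving at precisely this moment triggers a chain of push-outs: the $B-1$ unprocessed work-$k$ packets, each equal to the current maximum residual $k$, are ejected first, and the final work-$1$ arrival pushes out the HOL as well (since its residual still exceeds $1$), wasting all processing invested in it. Consequently $A$ transmits only about $B$ packets (those from the second burst), whereas an offline clairvoyant OPT---knowing the schedule---can either drop the first burst entirely or admit one of its packets and finish processing it by the time the second burst arrives, then transmit all the work-$1$'s; careful tuning of the burst sizes and timing yields OPT-throughput of roughly $2B-1$ against $A$-throughput of roughly $B$, giving the asymptotic ratio $2(1-1/B)$.

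For the $k<B$ case the buffer is larger than the processing time of a single hard packet, so the adversary amortises the waste differently. For $\PO$, sending $k+1$ work-$k$ packets at $t=1$ together with a suitably-timed work-$1$ burst forces one partially-processed packet to be lost per cycle of length on the order of $k+1$, yielding the ratio $\tfrac{2k}{k+1}$. For $\LPO$ the analogous template uses $k$ work-$k$ packets; the additional slot that $\LPO$ loses because of its delayed-transmission rule (a HOL work-$1$ packet cannot be emitted while any packet in the buffer has residual $>1$) shifts the bound to $\tfrac{2k-1}{k}$.

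The principal obstacle in each case is the per-slot bookkeeping: one must carefully trace the buffer states of $A$ and of OPT through the arrival, processing, and transmission phases of each time slot, verifying that the push-out rule fires exactly as claimed and exhibiting an explicit feasible offline strategy for OPT that attains the larger transmission count within buffer capacity $B$. The construction must also be genuinely periodic---both $A$ and OPT returning to empty (or symmetric) buffer states at the end of each cycle---so that the obtained ratio is competitive in the asymptotic sense required by the definition.
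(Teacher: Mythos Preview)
Your template for the $k\ge B$ case does not achieve the claimed ratio. If the first burst is $B$ packets of work $k$ and the second is a single burst of $B$ work-$1$ packets arriving while the HOL still has residual $>1$, then $\OPT$ (which is also bound by the FIFO processing order) cannot finish \emph{any} work-$k$ packet before the second burst arrives. Its best option is therefore to drop the first burst entirely and process only the $B$ work-$1$'s, exactly what $A$ ends up doing after the push-outs; the ratio collapses to $1+O(1/B)$, not $2(1-1/B)$. The missing idea, which the paper uses, is to send a \emph{single} heavy packet (of work $B$) followed by work-$1$ packets \emph{trickling in one per slot}; $\OPT$ drops the heavy packet and transmits each work-$1$ as it arrives, accumulating roughly $B$ extra transmissions while $A$ is stuck on the heavy HOL. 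A final burst of $B$ work-$1$'s then pushes out the still-unfinished heavy packet and resets the cycle, giving $2B-2$ versus $B$.

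For $k<B$ your sketches are too schematic to verify, and the specific recipe you state (``$k+1$ work-$k$ packets at $t=1$'' for $\PO$, ``$k$ work-$k$ packets'' for $\LPO$) does not evidently produce the ratios $\tfrac{2k}{k+1}$ and $\tfrac{2k-1}{k}$. The paper's construction is quite different: it fills the buffer with a mix of $(1-\alpha)B$ work-$k$ packets and $\alpha B$ work-$1$ packets, lets $\OPT$ reject the heavy ones, and then (for $\PO$) injects work-$1$ packets in geometrically shrinking batches of sizes $\alpha B/k,\ \alpha B/k^{2},\dots$ timed so that each batch fills exactly the space $\PO$ has freed. Summing the geometric series and solving for $\alpha$ is what yields $\tfrac{2k}{k+1}$. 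Your ``one partially processed packet lost per cycle of length $k+1$'' heuristic does not capture this mechanism.
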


For large $k$ (of the order $k\approx B^n$, $n>1$), logarithmic lower bounds follow.

\begin{theorem}
\label{thm:po-lower-general}
The competitive ratio of $\PO$ ($\LPO$) is at least $\lfloor\log_Bk\rfloor + 1 - O(\frac1B)$.
\end{theorem}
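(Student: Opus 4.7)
The plan is to construct an adversarial arrival sequence inductively, using the construction behind Theorem~\ref{thm:lb1} as the base case. Setting $n = \lfloor \log_B k \rfloor$, so that $k \ge B^n$, I would build a family of sequences $\sigma_1, \ldots, \sigma_n$ whose processing requirements lie in $\{1, B, B^2, \ldots, B^n\}$, and show by induction on $i$ that the competitive ratio of $\PO$ (and, with minor adaptations, $\LPO$) on $\sigma_i$ is at least $i + 1 - O(1/B)$. The base case $i=1$ is supplied directly by Theorem~\ref{thm:lb1}, which gives a ratio of $2(1 - 1/B)$ whenever $k \ge B$.

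For the inductive step, I would form $\sigma_i$ from $\sigma_{i-1}$ by prepending a ``heavy'' burst of $B$ packets with processing requirement $B^i$ at time $t = 1$, and offsetting the embedded copy of $\sigma_{i-1}$ so that its first arrival occurs at a carefully tuned time after the heavy burst. The argument would then consist of two parts: (a)~tracing the buffer evolution of $\PO/\LPO$ and showing that, once the lighter arrivals of $\sigma_{i-1}$ start, they push out all but at most one of the heavy packets (the partially-processed head-of-line one), so that from that point the algorithm's execution on $\sigma_i$ essentially mimics its execution on a standalone run of $\sigma_{i-1}$, giving $\PO/\LPO(\sigma_i) \le \PO/\LPO(\sigma_{i-1}) + O(1)$; and (b)~exhibiting an offline schedule which, by dropping or admitting the heavy packets judiciously and exploiting the prepended time window, achieves throughput at least $\OPT(\sigma_{i-1}) + B$ on $\sigma_i$. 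Combining (a) and (b) yields a multiplicative increase of $1 - O(1/B)$ in the ratio per inductive level, and iterating $n$ times gives the claimed lower bound $\lfloor \log_B k \rfloor + 1 - O(1/B)$.

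The main obstacle is to tune the time offset between the heavy burst and the embedded copy of $\sigma_{i-1}$ so that two competing requirements are simultaneously met: the window must be long enough for $\OPT$ to gain $B$ additional transmissions by appropriately scheduling the heavy packets, yet short enough that $\PO/\LPO$ cannot amortize the wasted work across subsequent levels. A second delicate point is preserving the inductive invariant across the push-out boundary, ensuring that $\PO/\LPO$'s buffer state, when $\sigma_{i-1}$ ``really begins,'' matches the initial state assumed by the analysis of $\sigma_{i-1}$; this requires a careful case analysis of which heavy packets survive the first wave of lighter arrivals. Finally, the $\LPO$ case demands an extra check that the laziness rule does not cause the algorithm to hold back the residual-$1$ head-of-line heavy packet in a way that inflates its throughput and breaks the recursion.
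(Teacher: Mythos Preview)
Your inductive scaffolding is sound in spirit---each level should cost a factor of roughly $B$ in maximum processing and buy one additive unit of competitive ratio---but the specific inductive step you propose has a genuine gap. With a \emph{uniform} heavy burst of $B$ packets all of work $B^i$, there is no way to make $\OPT$ gain $B$ transmissions that $\PO$/$\LPO$ does not also gain. If the time window before $\sigma_{i-1}$ is long enough for $\OPT$ to flush all $B$ heavy packets ($\ge B^{i+1}$ steps), then $\PO$/$\LPO$ flushes them too, and you get $\OPT(\sigma_i)=B+\OPT(\sigma_{i-1})$ against $\PO(\sigma_i)=B+\PO(\sigma_{i-1})$, which \emph{decreases} the ratio. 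If the window is short, $\OPT$ cannot extract $B$ extra transmissions from identical heavy packets, so claim~(b) fails. The FIFO discipline is what you need to exploit, and uniform packets give $\OPT$ no leverage over it: whatever $\OPT$ does with identical packets, $\PO$ does the same.

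The paper's construction differs precisely here. Instead of a uniform burst, each level installs a \emph{heterogeneous} buffer of the shape $1,2,3,\ldots,B-1,\text{(huge)}$, with the huge packet at HOL. FIFO forces $\PO$/$\LPO$ to grind through the huge packet; $\OPT$ simply never admits it and processes the $B-1$ smaller ones---whose total work is at most the huge packet's---in the same time. That is the source of the $B-1$ free transmissions per level. The induction then lifts the whole profile by adding $S$ (the previous level's maximum) to every entry and $S(B-1)$ to the huge one, so that after $\PO$/$\LPO$ burns through the new HOL, its buffer is ready to receive the previous level's instance via push-out. Your base case (Theorem~\ref{thm:lb1}, ratio $\approx 2$) would in fact slot into this induction, but the step itself must use the staircase profile, not a flat burst.
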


\subsection{Upper Bound on the Competitive Ratio of $\LPO$}
We already know that the performance of $\LPO$ and $\PO$ is incomparable in the worst case (see Theorem~\ref{l:PO<LPO}), and
it remains an interesting open problem to show an upper bound on the competitive ratio of $\PO$.
In this section we provide the first known upper bound of $\LPO$. Specifically, we prove the following theorem.

\begin{theorem}
\label{thm:lpo_upper_bound}
$\LPO$ is at most $\left(\ln k + 3 + \frac{o(B)}{B}\right)$-competitive.
\end{theorem}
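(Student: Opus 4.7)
The plan is to analyze $\LPO$ in \emph{iterations}, where an iteration is a maximal time interval $[t_s, t_e]$ during which $\LPO$'s buffer is non-empty, bracketed by empty-buffer instants. For a fixed iteration, let $T = t_e - t_s$, let $m$ denote the number of packets $\LPO$ transmits during the iteration (all during the final transmission burst, by the laziness of $\LPO$), and let $W = T - m$ be the number of processing cycles $\LPO$ spends in the push-out phase. The goal is to show that $\OPT$ transmits at most $(\ln k + 3)\,m$ packets in $[t_s, t_e]$, up to an additive $O(B)$ boundary term. Summing across iterations and dividing by $\LPO$'s total throughput yields the claimed competitive ratio; the $o(B)/B$ term reflects the amortization of a constant additive $O(B)$ boundary loss over $\LPO$'s throughput, which grows unboundedly with the length of the input.

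The first ingredient is a linear budget bound: $\OPT$ also has only $T$ processing slots in the iteration, so if $N$ denotes the set of packets $\OPT$ transmits during $[t_s, t_e]$, then $\sum_{p \in N} r(p) \le T + O(B)$, where the $O(B)$ correction absorbs any packet $\OPT$ partially processed outside the iteration. To extract a logarithmic bound from this linear budget, I would partition $N$ by required work into geometric classes $C_j = \{p \in N : r(p) \in (k/2^{j+1},\, k/2^j]\}$ for $j = 0, 1, \ldots, \lceil \log_2 k \rceil$. The budget constraint then translates into $\sum_j |C_j|\, k/2^{j+1} \le T + O(B)$, whence each individual class satisfies $|C_j| \le (T + O(B))\, 2^{j+1}/k$.

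The heart of the argument is a complementary bound $|C_j| \le m + \text{excess}_j$ obtained via a coupling with $\LPO$. Since $\LPO$ always pushes out the packet with maximum residual work, the $m$ packets that $\LPO$ retains for its final burst are, up to processing progress, the $m$ arrivals with the smallest required work seen during the iteration. Hence any $\OPT$ packet in $C_j$ beyond the first $m$ must correspond to an arrival that $\LPO$ rejected or pushed out in favour of a lighter packet, and the required-work gap translates into $\Omega(k/2^{j+1})$ processing cycles $\LPO$ must have spent elsewhere; this is what bounds $\text{excess}_j$ against $W$. Combining the two bounds by taking the minimum class-by-class and summing,
\[
|N| \;\le\; \sum_{j=0}^{\lceil \log_2 k \rceil} \min\left(m,\; (T + O(B))\,2^{j+1}/k\right) \;\le\; m\,\ln k + O(m) + O(B),
\]
where the harmonic-type sum, together with a careful base conversion from $\log_2$ to $\ln$ via a slightly refined geometric partition, produces the additive constant $3$.

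The main obstacle I expect is justifying the coupling between $\OPT$ and $\LPO$ used in the second bound. $\OPT$ is clairvoyant and may retain packets that $\LPO$ is forced to discard, so $\OPT$'s ability to transmit many small-work packets is not directly constrained by $\LPO$'s own retention. A naive snapshot comparison at iteration boundaries fails because $\OPT$ may spread the processing of a single long-work packet across several iterations, shifting work across iteration boundaries via its own buffer. The accounting must therefore be amortized both across iterations and across work classes, with careful tracking of how much residual work is handed off between iterations via $\OPT$'s buffer; this bookkeeping is what contributes the additional constant $3$ and the $o(B)/B$ correction in the final competitive ratio.
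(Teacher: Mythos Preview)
Your plan has a genuine gap at exactly the place you flag: the per-class bound $|C_j|\le m+\text{excess}_j$. The justification you offer --- that ``the $m$ packets that $\LPO$ retains \ldots are, up to processing progress, the $m$ arrivals with the smallest required work seen during the iteration'' --- is not correct. $\LPO$ processes in FIFO order and pushes out according to \emph{residual} work at the moment of congestion, so which packets survive depends on arrival timing, not just on the multiset of required works. A large HOL packet may have been whittled down before a smaller packet arrives; conversely a small packet may arrive only after $\LPO$ has already committed many cycles elsewhere. There is no clean bijection between $\OPT$'s class $C_j$ and $\LPO$'s survivors, and you give no alternative mechanism to control $|C_j|$. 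Without that bound, the harmonic sum never gets off the ground. A second, smaller issue: even if each $|C_j|\le m$ held, a dyadic partition has $\lceil\log_2 k\rceil$ classes, so the sum is at most $m\log_2 k$, not $m\ln k$; ``careful base conversion'' does not convert $\log_2$ to $\ln$ here, since refining the ratio increases the number of classes in exact proportion.

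The paper sidesteps the coupling entirely by a potential argument on $W_t$, the total residual work in $\LPO$'s buffer. Its key lemma is: once $\LPO$'s buffer is full, every packet $p$ that $\OPT$ accepts at time $t$ and finishes at time $t_e$ forces $W_{t_e}\le W_{t-1}-M_t$, where $M_t$ is $\LPO$'s current maximum residual work. The reason is a dichotomy: either $p$ pushes out $\LPO$'s max (immediate drop $M_t-r(p)$), after which $\OPT$ spends $r(p)$ cycles on $p$ while $\LPO$ does $r(p)$ more units of work; or $\LPO$ rejects $p$, which means $r(p)\ge M_t$, and $\OPT$ still spends $r(p)\ge M_t$ cycles. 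Since $M_t\ge W_t/B$, each $\OPT$ packet shrinks $W$ by a factor $\ge(1-1/B)$, so at most $\approx (B-1)\ln(W/B)\le B\ln k$ packets can be processed by $\OPT$ between the first congestion and the moment all of $\LPO$'s packets reach residual work~$1$. Two further $B$'s account for the pre-congestion phase and the inter-iteration gap, giving $\ln k+3$. This multiplicative-decay idea is what replaces your missing coupling and is also what delivers the natural logarithm rather than $\log_2$.
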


We remind that $\LPO$ does not transmit any packet until all packets in the buffer have exactly one processing cycle left.
The definition of $\LPO$ allows for a well-defined accounting infrastructure. In particular, $\LPO$'s definition helps us to define an \emph{iteration}
during which we will count the number of packets transmitted by the optimal algorithm and compare it to the
contents of $\LPO$'s buffer. The first iteration begins with the first arrival. An iteration ends when all packets in the $\LPO$ buffer have a single processing pass left. Each subsequent iteration starts after the transmission of all $\LPO$ packets from the previous iteration.

We assume that $\OPT$ never pushes out packets and it is work-conserving; without loss of generality, every optimal algorithm can be
assumed to have these properties since the input sequence is available for it a priori. Further, we enhance $\OPT$ with two additional properties: (1) at the start of each iteration, $\OPT$ flushes out all packets remaining in its buffer from the previous iteration (for free, with extra gain to its throughput);
(2) let $t$ be the first time when $\LPO$'s buffer is congested during an iteration; $OPT$ flushes out all packets that currently reside in its buffer at time $t-1$ (again,
for free, with extra gain to its throughput).
Clearly, the enhanced version of $\OPT$ is no worse than the optimal algorithm since both properties provide additional advantages to $\OPT$ versus the original optimal algorithm. In what follows, we will compare $\LPO$ with this enhanced version of $\OPT$ for the purposes of an upper bound.

To avoid ambiguity for the reference time, $t$ should be interpreted as the arrival time of a single packet. If more than one packet arrive at the same time slot,
this notation is considered for every packet independently, in the sequence in which they arrive (although they might share the same actual time slot).

\begin{claim}
\label{l:end}
Consider an iteration $I$ that begins at time $t'$ and ends at time $t$. The following statements hold:
\begin{inparaenum}[(1)]
\item \label{l:end:1} during $I$, the buffer occupancy of $\LPO$ is at least the buffer occupancy of $\OPT$;
\item \label{l:end:2} between two subsequent iterations $I$ and $I'$, $\OPT$ transmits at most $|\IB^{\LPO}_t|$ packets;
\item \label{l:end:3} if during a time interval $[t',t"]$, $t'\leq t"\leq t$, there is no congestion then during $[t',t"]$ $\OPT$ transmits at most $|\IB^{\LPO}_{t"}|$ packets.
\end{inparaenum}
\end{claim}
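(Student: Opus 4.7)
My plan is to prove the three parts separately, relying on two simple properties: during an iteration $\LPO$ transmits no packet (by the definition of laziness) while accepting every arrival into its buffer, with any push-outs preserving the total count; and the enhanced $\OPT$ empties its buffer both at the beginning of every iteration and immediately before the first time $\LPO$'s buffer becomes congested within the iteration.

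For part~(1), I would split the iteration $I$ into a \emph{pre-congestion} phase and a \emph{post-congestion} phase, where congestion refers to the first moment when $\LPO$'s buffer reaches capacity $B$. In the pre-congestion phase, $\LPO$ accepts every arrival (there is always room) and transmits nothing, so $|\IB^{\LPO}|$ equals the number of arrivals since $t'$. Since $\OPT$ starts the iteration with an empty buffer (start-of-iteration flush) and receives the same arrivals, $|\IB^{\OPT}|$ equals the number of arrivals since $t'$ minus whatever $\OPT$ has transmitted, and is hence bounded above by $|\IB^{\LPO}|$. At the first congestion time the extra flush empties $\OPT$'s buffer, while $\LPO$'s buffer is exactly at $B$, so the inequality trivially holds; it continues to hold through the rest of $I$, since $\LPO$'s occupancy remains $B$ (push-outs preserve the count and $\LPO$ still does not transmit) while $|\IB^{\OPT}| \le B$ always.

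For part~(2), iteration $I$ ends at time $t$ with $\LPO$ holding $m := |\IB^{\LPO}_t|$ packets, each needing exactly one cycle, and by definition $\LPO$ spends exactly $m/C$ time slots draining its buffer before iteration $I'$ starts. During this interval $\OPT$ executes at most $(m/C)\cdot C = m$ processing cycles in total, and since every transmitted packet needs at least one cycle, $\OPT$ transmits at most $m$ packets in that window. Part~(3) is analogous in spirit: on a congestion-free subinterval $[t',t'']$ of $I$, $\LPO$ accepts every arrival and transmits none, so $|\IB^{\LPO}_{t''}|$ equals the number of arrivals in $[t',t'']$; because $\OPT$ started empty at $t'$, every packet it transmits during $[t',t'']$ must have arrived in that interval, so the number of such transmissions is at most the total number of arrivals, which is exactly $|\IB^{\LPO}_{t''}|$.

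I expect the main subtlety to lie in part~(1) at the boundary of the congestion event. The enhanced $\OPT$ flushes at time $t-1$, just before congestion is triggered, so one must verify that $|\IB^{\LPO}|$ reaches $B$ at precisely that moment --- this depends on the sequential interpretation of simultaneous arrivals noted in the preamble and on the fact that $\LPO$ has not transmitted anything since $t'$. A secondary point is that part~(2) must be stated per $C$ cores for general $C$; in the write-up I would keep the bookkeeping in \emph{processor cycles} rather than time slots, so that the argument works uniformly and the claim $\OPT \le m$ follows simply from the upper bound on total work performed in the inter-iteration window.
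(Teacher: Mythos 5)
Your proof is correct and for parts (1) and (3) follows essentially the same route as the paper: $\LPO$ is greedy and transmits nothing during an iteration, so pre-congestion its occupancy equals the number of arrivals while $\OPT$'s (starting from an empty buffer after the flush) is at most that, and post-congestion $\LPO$ sits at $B$; likewise (3) follows because every packet $\OPT$ transmits on a congestion-free prefix must be one of the arrivals all of which $\LPO$ still holds. The one genuine difference is part (2). The paper argues it purely via occupancy: by (1) $\OPT$ holds at most $|\IB^{\LPO}_t|$ packets at the end of the iteration, all of $\LPO$'s being $1$'s. You instead bound the total number of processing cycles $\OPT$ can execute during the $|\IB^{\LPO}_t|/C$ draining slots, getting at most $|\IB^{\LPO}_t|$ transmissions since each packet needs at least one cycle. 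Your version is arguably the more complete one: the occupancy argument by itself does not obviously account for packets that arrive during the inter-iteration window and could be transmitted by $\OPT$ there, whereas the cycle-counting bound covers them automatically and scales uniformly in $C$. Your flagged subtlety about the flush happening at time $t''-1$ versus the moment $\LPO$ reaches $B$ is real but harmless here, since once $\LPO$ is at capacity the inequality in (1) holds for any value of $\OPT$'s occupancy.
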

\begin{proof}
\begin{inparaenum}[(1)]
\item $\LPO$ takes as many packets as it can until its buffer is full and once full it remains so
      for the rest of the iteration hence its buffer is at least as full as $\OPT$'s during an
      iteration.
\item By (\ref{l:end:1}), at the end of an iteration the buffer occupancy of $\LPO$ is at least the buffer occupancy of $\OPT$; moreover, all packets in $\LPO$ buffer at the end of an iteration have a single processing cycle.
\item Since during $[t',t"]$ there is no congestion and since $\LPO$ is greedy, $\LPO$ buffer contains all packets that have arrived after $t$, and thus, $\OPT$ cannot transmit more packets than have arrived.
\end{inparaenum}
\qed
\end{proof}

We denote by $M_t$ the maximal number of residual processing cycles among all packets in $\LPO$'s buffer at time $t$;
by $W_t$, the total residual work for all packets in $\LPO$'s buffer at time $t$.

\begin{lemma}
\label{l:max}
For every packet accepted by $\OPT$ at time $t$ and processed by $\OPT$ during the time interval
$[t_s,t_e]$, $t\leq t_s\leq t_e$, if $|\IB^{\LPO}_{t-1}|=B$
then $W_{t_e}\leq W_{t-1}-M_{t}$.
\end{lemma}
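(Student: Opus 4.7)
The plan is to track the evolution of the total residual work $W$ in $\LPO$'s buffer from time $t-1$ up to time $t_e$, exploiting the identity
\[
W_{t_e} - W_{t-1} = (\text{work of arrivals admitted by }\LPO) - (\text{work pushed out by }\LPO) - (\text{processing cycles by }\LPO)
\]
taken over the interval $[t, t_e]$. The proof then splits on whether $\LPO$ admits or rejects the specific packet $p$ at time $t$.

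Three preliminary facts will drive the book-keeping. (a) Because $|\IB^{\LPO}_{t-1}|=B$ and $\LPO$ never transmits during an iteration, $\LPO$'s buffer stays full for the rest of the current iteration; combined with the $\PO$-style pushout rule inherited by $\LPO$, every subsequent admission must evict a strictly larger-work packet, so the net contribution of arrivals after $p$ to $W$ is non-positive. (b) Since $\LPO$ is work-conserving and its full buffer always contains positive residual work, it performs $t_e-t+1$ processing cycles in $[t,t_e]$, and from $t_s\ge t$ together with $t_e-t_s+1=r_t(p)$ this total is at least $r_t(p)$. (c) We always have $M_t\le M_{t-1}$, since the buffer is either unchanged (reject) or loses its maximum while gaining a strictly smaller $p$ (admit).

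With (a)--(c) in place the two cases are straightforward. In the admit case the pushout rule gives $r_t(p)<M_{t-1}$ and the evicted packet has work exactly $M_{t-1}$, so $p$'s own admission already reduces $W$ by $M_{t-1}-r_t(p)$; adding the at least $r_t(p)$ processing cycles and the non-positive contribution of later arrivals yields a total drop of at least $M_{t-1}\ge M_t$. In the reject case, rejection forces $r_t(p)\ge M_{t-1}=M_t$, and the at least $r_t(p)$ processing cycles alone deliver the required drop of $M_t$.

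The main obstacle I expect lies in making observation (a) airtight over the entire interval $[t,t_e]$. The argument is immediate when $[t,t_e]$ lies within a single iteration of $\LPO$; the delicate point is the possibility that the iteration ends at some $t_{\text{end}}\in[t,t_e]$ and $\LPO$ enters its transmission phase, during which a new arrival can slide into a slot freed by a transmission without triggering a pushout, apparently threatening the non-positivity of arrival contributions. In the context of the downstream competitiveness proof the lemma is applied strictly within a single iteration, bypassing this issue; in full generality one would pair each such un-paired admission with the corresponding $r=1$ transmission in the same slot (which already consumes one processing cycle on a unit-work packet), and verify that the resulting net contribution per slot is still absorbed by the $\ge r_t(p)$ processing bound.
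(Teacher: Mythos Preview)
Your proposal is correct and follows essentially the same route as the paper: the identical case split on whether $\LPO$ admits or rejects $p$, with the same accounting in each branch (pushout gain $M-r_t(p)$ plus at least $r_t(p)$ processing cycles in the admit case; $r_t(p)\ge M$ processing cycles in the reject case). Your facts (a)--(c) spell out explicitly the background bookkeeping (later arrivals can only decrease $W$ once the buffer is full, the $M_{t}$ versus $M_{t-1}$ distinction) that the paper's proof leaves implicit, and your observation that the lemma is only invoked within a single iteration is the correct resolution of the boundary concern you raise.
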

\begin{proof}
If $\LPO$'s buffer is full then a packet $p$ accepted by $\OPT$ either pushes out a packet in $\LPO$'s buffer or is rejected by $\LPO$.
If $p$ pushes a packet out, then the total work
$W_{t-1}$ is reduced by $M_{t}-r_{t}(p)$. Moreover, after processing $p$, $W_{t_e}\leq W_{t-1}-(M_{t}-r_t(p))-r_t(p)= W_{t-1}-M_{t}$.
Otherwise, if $p$ is rejected by $\LPO$ then $r_{t}(p)\geq M_{t}$, and thus $W_{t_e}\leq W_{t-1}-r_t(p)\leq W_{t-1}-M_{t}$.\qed
\end{proof}

Let $t$ be the time of the first congestion during an iteration $I$ that has ended at time $t'$.
Observe that by definition, at time $t$, $OPT$ flushes out all packets that were still in its buffer at time $t-1$.
We denote by $f(B,W)$ the maximal number of packets that $\OPT$ can process during $[t,t']$, where $W=W_{t-1}$.

\begin{lemma}
\label{l:crux}
For every $\epsilon>0$, $f(B,W) \le \frac{B-1}{1-\epsilon}\ln\frac{W}{B} + o(B\ln\frac WB)$.
\end{lemma}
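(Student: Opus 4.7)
The plan is to establish a multiplicative-decrease recurrence on $W$ and unroll it to the stated logarithmic form. First I would record the structural invariants of the iteration: during $[t,t']$ the buffer $\IB^{\LPO}$ is full, because $\LPO$ does not transmit during an iteration and every push-out preserves buffer occupancy; consequently $W_s$ is monotonically non-increasing on $[t,t']$, and at iteration end every packet in $\LPO$'s buffer has exactly one residual cycle, so $W_{t'} = B$. In particular $|\IB^{\LPO}_{\tau-1}| = B$ for every arrival time $\tau$ strictly inside the iteration, so Lemma~\ref{l:max} applies to every packet that $\OPT$ accepts and later processes in this interval.

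The next step is the per-packet bound: since $\LPO$'s full buffer of $B$ packets totals $W_{\tau-1}$ work, its maximum-work packet satisfies $M_\tau \ge W_{\tau-1}/B$, and substituting into Lemma~\ref{l:max} gives $W_{e} \le W_{\tau-1}(1 - 1/B)$ for each $\OPT$-processed packet, where $e$ is $\OPT$'s processing-end time. Using this I would set up the recurrence
\[
f(B,W) \;\le\; 1 + f\!\left(B,\, W(1-1/B)\right),
\]
arguing that after $\OPT$ completes its first processed packet at some time $e_1$, the interval $[e_1, t']$ is itself a valid instance of the same problem with $\LPO$'s work at $e_1$ bounded above by $W(1-1/B)$ and the buffer still congested; monotonicity of $f$ in $W$ then lets us replace the actual $W_{e_1}$ by $W(1-1/B)$. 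Unrolling the recurrence and stopping when $W$ drops to $B$ yields $f(B,W) \le \ln(W/B)/(-\ln(1-1/B)) + O(1)$.

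The final step is a Taylor estimate: for any fixed $\epsilon > 0$ and $B$ sufficiently large, $-\ln(1-1/B) \ge (1-\epsilon)/(B-1)$, which follows from the expansion $-\ln(1-1/B) = 1/B + 1/(2B^2) + O(1/B^3)$. Therefore the main term is at most $\frac{B-1}{1-\epsilon}\ln(W/B)$, and both the $+O(1)$ slack and the gap between the Taylor bound and the exact logarithm are absorbed into the $o(B\ln(W/B))$ remainder.

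The step I expect to be hardest is justifying the recurrence itself, because several $\OPT$ packets may arrive within a single slot or while $\OPT$ is still processing an earlier packet, so the intervals $[\tau_i - 1, e_i]$ on which Lemma~\ref{l:max} delivers its guarantee can overlap and the per-packet factor $(1-1/B)$ does not telescope naively. The clean way around this is to treat arrivals in intra-slot order (so that the $W$ just before each arrival is the up-to-date value after all previous push-outs) and to exploit monotonicity of $W$: once $W_{e_1} \le W(1-1/B)$ has been established, the remainder of the iteration is dominated by the same problem with smaller initial work, and the one-dimensional recurrence in $W$ captures the worst case.
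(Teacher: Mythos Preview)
Your proposal is correct and follows essentially the same route as the paper: the core is the multiplicative decrease $W \mapsto W(1-1/B)$ per $\OPT$-processed packet (from Lemma~\ref{l:max} plus $M_t \ge W_t/B$), followed by an induction/unrolling that terminates at $W=B$, and the asymptotic step $-\ln(1-1/B) \ge (1-\epsilon)/(B-1)$ for large $B$ is exactly the paper's inequality $e^{(1-\epsilon)/(B-1)} \le B/(B-1)$ in logarithmic form. Your discussion of overlapping $\OPT$ intervals is a bit more explicit than the paper's treatment, but the argument and its resolution (monotonicity of $W$ under intra-slot ordering) are the same.
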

\begin{proof}
By definition, $\LPO$ does not transmit packets during an iteration. Hence, if the buffer of $\LPO$ is full, it will remain full until the end of iteration.
At any time $t$, $M_{t}\ge\frac{W_{t}}{B}$: the maximal required processing is no less than the average.
By Lemma~\ref{l:max}, for every packet $p$ accepted by $\OPT$ at time $t$, the total work $W=W_{t-1}$ is reduced by $M_t$ after $\OPT$ has processed $p$.
Therefore, after $\OPT$ processes a packet at time $t'$, $W_{t'}$ is at most $W\left(1-\frac{1}{B}\right)$.

We now prove the statement by induction on $W$. The base is trivial for $W=B$ since all packets are already $1$'s.

The induction hypothesis is now that after one packet is processed by $\OPT$, there cannot be more than
$f(B,\frac{W}{B}\left(1-\frac1B\right))\le \frac{B-1}{1-\epsilon}\ln\left[\frac{W}{B}\left(1-\frac1B\right)\right]$
packets left, and for the induction step we have to prove that
$$\frac{B-1}{1-\epsilon}\ln\left[\frac{W}{B}\left(1-\frac1B\right)\right]+ 1 \le \frac{B-1}{1-\epsilon}\ln\frac{W}{B} .$$
This is equivalent to
$$\ln\frac{W}{B} \ge \ln\left[\frac{W}{B}\frac{B-1}Be^{\frac{1-\epsilon}{B-1}}\right],$$
and this holds asymptotically because for every $\epsilon>0$, we have $e^{\frac{1-\epsilon}{B-1}}\le \frac{B}{B-1}$ for $B$ sufficiently large.\qed
\end{proof}

Now we are ready to prove Theorem~\ref{thm:lpo_upper_bound}.
\begin{proof}[of Theorem~\ref{thm:lpo_upper_bound}]
Consider an iteration $I$ that begins at time $t'$ and ends at time $t$.
\begin{enumerate}
\item \emph{$\LPO$'s buffer is not congested during $I$.}
In this case, by Claim~\ref{l:end}(\ref{l:end:3}) $\OPT$ cannot transmit more than $|\IB^{\LPO}_t|$ packets during $I$.
\item \emph{During $I$, $\LPO$'s buffer is first congested at time $t''$, $t'\leq t''\leq t$.}
If during $I$ $\OPT$ transmits less than $B$ packets then we are done. By Claim~\ref{l:end}(\ref{l:end:3}), during $[t',t'']$ $\OPT$ can transmit at most $B$ packets. Moreover, at most $B$ packets are left in $\OPT$ buffer at time $t''-1$. By Lemma~\ref{l:crux}, during $[t'',t]$ $\LPO$ transmits at most
$(\ln k + \frac{o(B)}{B})B$ packets (because $W\le kB$), so the total amount over a congested iteration is at most $(\ln k +2+\frac{o(B)}{B})B$ packets.
\end{enumerate}
Therefore, during an iteration $\OPT$ transmits at most $(\ln k +2+o(1))|\IB^{\LPO}_t|$ packets. Moreover, by Claim~\ref{l:end}(\ref{l:end:2}), between two subsequent iterations $\OPT$ can transmit at most $|\IB^{\LPO}_t|$ additional packets. Thus, $\LPO$ is at most $\ln k + 3 + \frac{o(B)}{B}$-competitive.\qed
\end{proof}

The bound shown in Theorem~\ref{thm:lpo_upper_bound} is asymptotic. To cover small values of $B$, we show a weaker bound ($\log_2{k}$ instead of $\ln k$) on inputs where $\LPO$ never pushes out packets that are currently being processed. 

The following theorem shows an upper bound for this family of inputs; it also provides motivation for a new algorithm that does not push out packets that are currently being processed. This restriction is practical (if a packet is being processed, perhaps this means that it has left the queue and gone on, e.g., to CPU cache), and the analysis of such an algorithm is an interesting problem that we leave open.

\begin{theorem}
\label{thm:upper_bound_for_any_B}
For every $B>0$ and $k>0$, if $\LPO$ never pushes out packets that are currently being processed then $\LPO$ is at most $\left(\log_2{k}+3+\frac{B-1}{B}\right)$-competitive.
\end{theorem}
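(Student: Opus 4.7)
The plan is to replay the proof of Theorem~\ref{thm:lpo_upper_bound} almost verbatim, with the only substitution being that the asymptotic estimate of Lemma~\ref{l:crux} is replaced by a non-asymptotic bound valid for every $B\ge 2$ (the case $B=1$ is trivial by direct case analysis). The iteration framework, the enhanced adversary $\OPT$, Claim~\ref{l:end}(1--3) and Lemma~\ref{l:max} are all reused unchanged. The hypothesis that $\LPO$ never pushes out a packet currently being processed is used to chain Lemma~\ref{l:max} cleanly across all $\OPT$ processings inside the congested phase: under this restriction the head-of-line packet of $\LPO$ is never replaced mid-processing, so the per-step decrease $W \to W(1-1/B)$ composes step by step without any technical caveat about partially processed packets.

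The central step is to prove the following non-asymptotic analog of Lemma~\ref{l:crux}: for every $B\ge 2$ and every $W\le kB$, $f(B,W)\le B\log_2 k+(B-1)$. From Lemma~\ref{l:max}, each $\OPT$ processing during the congested phase satisfies $W_{t_e}\le W_{t-1}-M_t\le W_{t-1}(1-1/B)$, because $M_t$ is the maximum residual work and hence at least the average $W_t/B$. The new non-asymptotic ingredient is the elementary inequality $(1-1/B)^B\le 1/e<1/2$, which holds for every $B\ge 2$. Composing $B$ consecutive per-step reductions therefore at least halves $W$, so after $m$ such blocks $W\le W_0/2^m$. Since the iteration ends once $W$ reaches $B$ (all remaining packets in $\LPO$'s buffer have a single residual cycle), we obtain $m\le\lceil\log_2(W_0/B)\rceil$. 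The additive $(B-1)$ term comes from the final, possibly incomplete, block of at most $B-1$ further $\OPT$ processings, handled by direct inspection at the base case $W_0\in(B,2B]$.

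Given this lemma, the final accounting mirrors Theorem~\ref{thm:lpo_upper_bound}. In a congested iteration $\OPT$ transmits at most $B$ packets before the first congestion at time $t''$ by Claim~\ref{l:end}(3), at most $B$ further packets via the enhanced flush at time $t''-1$, at most $B\log_2 k+(B-1)$ during the congested portion by the new lemma, and at most $|\IB^{\LPO}_t|\le B$ packets between this iteration and the next (Claim~\ref{l:end}(2)). Dividing the total $B\log_2 k+3B+(B-1)$ by $|\IB^{\LPO}_t|=B$ gives exactly $\log_2 k+3+(B-1)/B$; non-congested iterations are bounded by Claim~\ref{l:end}(3) as before. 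The main obstacle is the bookkeeping that squeezes the final additive term down to $(B-1)/B$ rather than a looser $+1$: this rests on the observation that once $W_0/2^{m-1}\le 2B$ a single block of at most $B-1$ further $\OPT$ processings already drives $W$ down to $B$, a step that has to be spelled out at the base case to avoid an extra unit of slack.
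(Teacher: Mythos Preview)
Your proof is correct in outline but takes a genuinely different route from the paper's. The paper does \emph{not} reuse the total-work argument of Lemma~\ref{l:crux}; instead it partitions the congested portion of an iteration into consecutive \emph{subintervals} during each of which $\OPT$ transmits exactly $B$ packets, and tracks the \emph{average} residual work $A^s_i$, $A^e_i$ in $\LPO$'s buffer at the endpoints of subinterval~$i$. The no-pushout-while-processing hypothesis is invoked there to justify the relation $A^e_i \le \min\bigl(A^O_i,\,A^s_i - A^O_i\bigr)$ (with $A^O_i$ the average work of $\OPT$'s $B$ packets in that subinterval), from which the halving $A^e_i \le A^s_i/2$ and hence the bound of $\log_2 k$ subintervals follow directly; the leftover after the last full subinterval supplies the additive $B-1$.

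Your argument instead sharpens Lemma~\ref{l:crux} itself: the elementary observation $(1-1/B)^B \le 1/e < 1/2$ for every $B\ge 2$ turns the asymptotic $\ln k$ into a non-asymptotic $\log_2 k$, and the rest is the same induction on $W$. This is neat, but note that your invocation of the hypothesis is not doing real work: the per-step reduction $W\to W(1-1/B)$ and its chaining across successive $\OPT$ packets are exactly those of Lemma~\ref{l:crux}, which the paper states and uses for unrestricted $\LPO$. In the paper's subinterval-and-average argument, by contrast, the hypothesis is structurally needed to get the clean $\min$ relation above. So what your approach buys is a shorter proof that stays close to Theorem~\ref{thm:lpo_upper_bound}; what the paper's approach buys is an argument in which the restricted hypothesis is visibly load-bearing, motivating the restricted-pushout variant as an algorithm in its own right.
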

\begin{proof}
The case when there is no congestion during iteration is identical to the same case of Theorem~\ref{thm:lpo_upper_bound}.

If, during an iteration $I$, $\LPO_{\mathrm{p}}$'s buffer is congested, it is full and it will remain full till the end of iteration.
If during $I$ $OPT$ transmits less than $B$ packets then we are done. Otherwise, consider sub-intervals of time during $I$ when $OPT$ transmits exactly $B$ packets. 

We denote by $A^{O}_i$ the average number of processing passes between all packets transmitted by $\OPT$ during the $i^{\text{th}}$ subinterval.
We also denote by $A^s_i$ and by $A^e_i$ the average number of residual processing passes among all packets in $\LPO_{\mathrm{p}}$'s buffer at the start and at the end
of the $i^{\text{th}}$ subinterval, respectively. Since any packet processed during the subinterval is not pushed out, $A^e_i=\min(A_i,A^s_i-A_i)$. Clearly, as a
result the maximal number of subintervals during an iteration is achieved when $A_i=A^s_i-A_i$. Therefore, the maximal number of subintervals during an iteration is
bounded by $\log_2{k}$ (recall that $A^s_1\leq k$). By definition, $\OPT$ can gain at most $2B$ packets at the time of the first congestion during iteration.
In the worst case, from the end of the last subinterval till the end of iteration $\OPT$ can transmit at most $B-1$ additional packets.  
Thus, during a congested iteration $\OPT$ transmits at most $(\log_2{k}+2)B+B-1$ packets. Moreover, by Claim~\ref{l:end}(\ref{l:end:2}), between two subsequent iterations $\OPT$ can transmit at most $B$ additional packets. Thus, $\LPO_{\mathrm{p}}$ is at most $\left(\log_2{k} + 3 + \frac{B-1}{B}\right)$-competitive.\qed
\end{proof}

\section{Simulation Study}
\label{sec:simulation}

In this section, we consider the proposed policies (both push-out and non-push-out) for FIFO buffers
and conduct a simulation study in order to further explore and validate their performance.
Namely, we compare the performance of $\NPO$, $\PO$, and $\LPO$ in different settings. It was shown in \cite{KKSS+11} that a push-out algorithm that processes packets with less required processing first is optimal. In what follows we denote it by $\OPT^*$. Clearly, $\OPT$ in the FIFO queueing model does not outperform $\OPT^*$.

Our traffic is generated using an ON-OFF Markov modulated
Poisson process (MMPP), which we use to simulate bursty traffic. The choice of parameters is governed by the
average arrival load, which is determined by the product of
the average packet arrival rate and the average number of
processing cycles required by packets. For a choice of parameters
yielding an average packet arrival rate of $\lambda$, where
every packet has its required number of passes chosen
uniformly at random within the range $[1,k]$, we obtain an
average arrival load (in terms of required passes) of $\lambda
\cdot \frac{k+1}{2}$.

In our experiments, the ``OFF'' state has average arrival rate $\lambda=0{.}3$, and the ``ON'' state has average arrival rate $\lambda=4{.}5$
(the number of packets is uniformly distributed between $3$ and $6$). By performing simulations for variable
values of the maximal number of required passes $k$ in the range $[1,40]$, we essentially evaluate the performance of
our algorithms in settings ranging from underload (average arrival load of $0{.}3$ for $k=1$ and $0{.6}$ for $k=2$)
to extreme overload (average arrival load of $180$ in the ``ON'' state for $k=40$), which enables us to validate the
performance of our algorithms in various traffic scenarios.

\begin{figure}
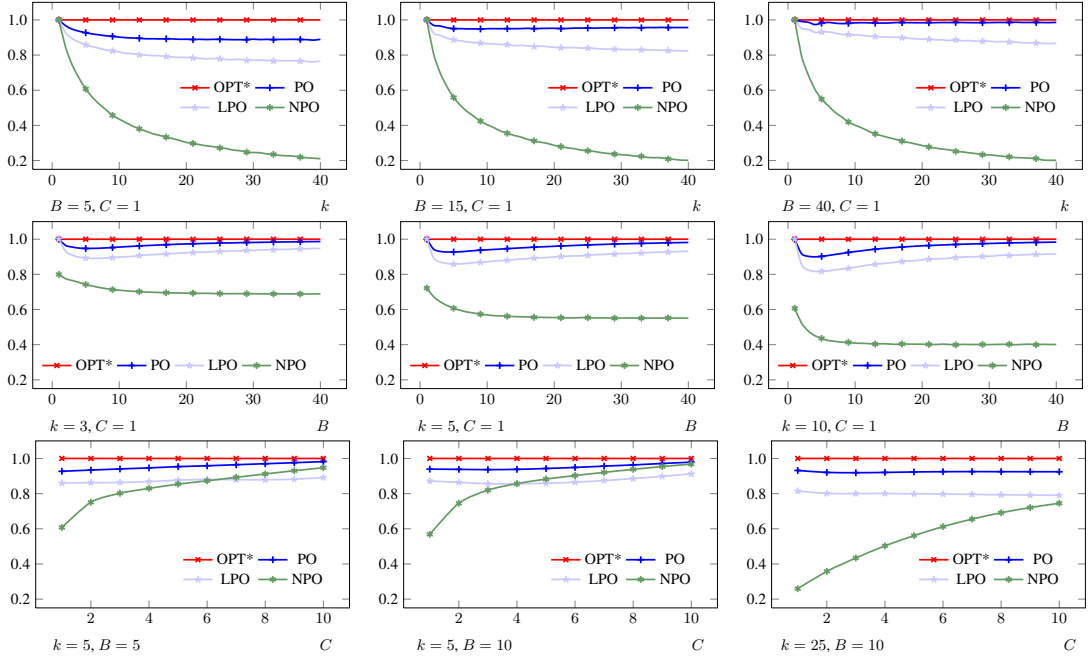

\begin{center}

\tikzsimulrightlegend{ \input{graphs/OPTSort.B5C1.mmpp.tex}
\input{graphs/FIFO.B5C1.mmpp.tex}
\input{graphs/LPQ.B5C1.mmpp.tex}
\input{graphs/GreedyNP.B5C1.mmpp.tex} }{$B=5$, $C=1$ \hspace{100pt} $k$}

\tikzsimulrightlegend{ \input{graphs/OPTSort.B15C1.mmpp.tex}
\input{graphs/FIFO.B15C1.mmpp.tex}
\input{graphs/LPQ.B15C1.mmpp.tex}
\input{graphs/GreedyNP.B15C1.mmpp.tex} }{$B=15$, $C=1$ \hspace{100pt} $k$}

\tikzsimulrightlegend{ \input{graphs/OPTSort.B40C1.mmpp.tex}
\input{graphs/FIFO.B40C1.mmpp.tex}
\input{graphs/LPQ.B40C1.mmpp.tex}
\input{graphs/GreedyNP.B40C1.mmpp.tex} }{$B=40$, $C=1$ \hspace{100pt} $k$}

\tikzsimul{ \input{graphs/OPTSort.k3C1.mmpp.tex}
\input{graphs/FIFO.k3C1.mmpp.tex}
\input{graphs/LPQ.k3C1.mmpp.tex}
\input{graphs/GreedyNP.k3C1.mmpp.tex} }{$k=3$, $C=1$ \hspace{100pt} $B$}

\tikzsimul{ \input{graphs/OPTSort.k5C1.mmpp.tex}
\input{graphs/FIFO.k5C1.mmpp.tex}
\input{graphs/LPQ.k5C1.mmpp.tex}
\input{graphs/GreedyNP.k5C1.mmpp.tex} }{$k=5$, $C=1$ \hspace{100pt} $B$}

\tikzsimul{ \input{graphs/OPTSort.k10C1.mmpp.tex}
\input{graphs/FIFO.k10C1.mmpp.tex}
\input{graphs/LPQ.k10C1.mmpp.tex}
\input{graphs/GreedyNP.k10C1.mmpp.tex} }{$k=10$, $C=1$ \hspace{100pt} $B$}

\tikzsimulrightbottomlegend{ \input{graphs/OPTSort.k5B5.mmpp.tex}
\input{graphs/FIFO.k5B5.mmpp.tex}
\input{graphs/LPQ.k5B5.mmpp.tex}
\input{graphs/GreedyNP.k5B5.mmpp.tex} }{$k=5$, $B=5$ \hspace{100pt} $C$}

\tikzsimulrightbottomlegend{ \input{graphs/OPTSort.k5B10.mmpp.tex}
\input{graphs/FIFO.k5B10.mmpp.tex}
\input{graphs/LPQ.k5B10.mmpp.tex}
\input{graphs/GreedyNP.k5B10.mmpp.tex} }{$k=5$, $B=10$ \hspace{100pt} $C$}

\tikzsimulrightbottomlegend{ \input{graphs/OPTSort.k25B10.mmpp.tex}
\input{graphs/FIFO.k25B10.mmpp.tex}
\input{graphs/LPQ.k25B10.mmpp.tex}
\input{graphs/GreedyNP.k25B10.mmpp.tex} }{$k=25$, $B=10$ \hspace{100pt} $C$}
\end{center}
\caption{Competitive ratio as a function of parameters: top row, of $k$; middle row, of $B$; bottom row, of $C$.}
\label{fig:simul}
\vspace{-0.5cm}
\end{figure}

Figure~\ref{fig:simul} shows the results of our simulations. The vertical axis in all
figures represents the ratio between the algorithm's
performance and $OPT^*$ performance given the
arrival sequence (so the red line corresponding to $\OPT^*$ is always horizontal at $1$).

We conduct three sets of simulations: the first one is targeted at a better
understanding of the dependence on the number of processing cycles, the second
evaluates dependency of performance from buffer size, and the third
aims to evaluate the power of having multiple cores.

We note that the standard deviation
throughout our simulation study never exceeds $0.05$
(deviation bars are omitted from the figures for readability).
For every choice of parameters, we conducted $200{,}000$ rounds (time slots) of simulation.

\subsection{Variable Maximum Number of Required Processing Cycles}
In these simulations, we restricted our attention to the single core case ($C=1$).
The top row of graphs on Fig.~\ref{fig:simul} shows that $\OPT^*$ keeps outperforming $\LPO$ and $\NPO$ more and more as $k$ grows.
In these settings, the difference in the order of processing between $OPT^*$ and $PO$ is small. The performance of $\LPO$ versus $\NPO$
degrades moderately since $\LPO$ is a push-out algorithm. This behaviour is of course as expected.

\subsection{Variable Buffer Size}
In this set of simulations we evaluated the performance of our algorithms for variable values of $B$ in the range $[1,40]$.
Throughout our simulations we again assumed a single core ($C=1$) and evaluated different values of $k$.
The middle row on Fig.~\ref{fig:simul} presents our results. Unsurprisingly, the performance of all algorithms
significantly improves as the buffer size increases; the difference between $\OPT^*$ and two other push-out algorithms visibly reduces,
but, of course, it would take a huge buffer for $\NPO$ to catch up (one would need to virtually remove the possibility of congestion).

\subsection{Variable Number of Cores}
In this set of simulations we evaluated the performance of our algorithms for variable values of $C$ in the range $[1,10]$.
The bottom row of Fig.~\ref{fig:simul} presents our results; the performance of all algorithms, naturally, improves drastically as the number of cores increases.
There is an interesting phenomenon here: push-out capability becomes less important since buffers are congested less often, but
$\LPO$ keeps paying for its ``laziness''; so as $C$ grows, eventually $\NPO$ outperforms $\LPO$.
The increase in the number of cores essentially provides the network processor (NP) with a speedup proportional to the number of cores (assuming the average arrival rate remains constant).

\section{Conclusion}
\label{sec:discussion}
The increasingly heterogeneous needs of NP traffic processing
pose novel design challenges for NP architects. In this paper, we provide
performance guarantees for NP buffer scheduling algorithms with FIFO queueing for
packets with heterogeneous required processing. The objective is to maximize the
number of transmitted packets under various settings such as push-out and
non-push-out buffers. We validate our results by simulations. As future
work, it will be interesting to show an upper bound for the $\PO$ algorithm and try
to close the gaps between lower and upper bounds of the proposed on-line algorithms.

\bibliographystyle{splncs}
\bibliography{np}

\begin{thebibliography}{10}

\bibitem{xelerated}
Xelerated:
\newblock X11 family of network processors, product brief (2010)
  [Online]~\url{http://www.xelerated.com/Uploads/Files/67.pdf}.

\bibitem{cavium}
Cavium:
\newblock Octeon ii cn68xx multi-core mips64 processors, product brief (2010)
  [Online]~\url{http://www.caviumnetworks.com/OCTEON-II_CN68XX.html}.

\bibitem{amcc}
AMCC:
\newblock np7310 10 gbps network processor, product brief (2010)
  [Online]~\url{http://www.appliedmicro.com/MyAMCC/jsp/public/productDetail/pr%
oduct_detail.jsp?productID=nP7310}.

\bibitem{cisco}
Cisco:
\newblock The cisco quantumflow processor, product brief (2010)
  [Online]~\url{http://www.cisco.com/en/US/prod/collateral/routers/ps9343/solu%
tion_overview_c22-448936.html}.

\bibitem{ezchip}
EZChip:
\newblock Np-4 network processor, product brief (2010)
  [Online]~\url{http://www.ezchip.com/p_np4.htm}.

\bibitem{OF}
McKeown, N., Parulkar, G., Shenker, S., Anderson, T., Peterson, L., Turner, J.,
  Balakrishnan, H., Rexford, J.:
\newblock Openflow switch specification (2011)
  [Online]~\url{http://www.openflow.org/documents/openflow-spec-v1.1.0.pdf}.

\bibitem{KKSS+11}
Keslassy, I., Kogan, K., Scalosub, G., Segal, M.:
\newblock Providing performance guarantees in multipass network processors.
\newblock In: INFOCOM. (2011)  3191--3199

\bibitem{KLSS+12}
Kogan, K., L\'opez-Ortiz, A., Scalosub, G., Segal, M.:
\newblock Large profits or fast gains: A dilemma in maximizing throughput with
  applications to network processors (2012)
  [Online]~\url{http://arxiv.org/abs/1202.5755}.

\bibitem{G+10}
Goldwasser, M.:
\newblock A survey of buffer management policies for packet switches.
\newblock SIGACT News \textbf{41} (2010)  100--128

\bibitem{MPL}
Mansour, Y., {Patt-Shamir}, B., Lapid, O.:
\newblock Optimal smoothing schedules for real-time streams.
\newblock Distributed Computing \textbf{17} (2004)  77--89

\bibitem{KLMPSS-04}
Kesselman, A., Lotker, Z., Mansour, Y., Patt-Shamir, B., Schieber, B.,
  Sviridenko, M.:
\newblock Buffer overflow management in {QoS} switches.
\newblock SIAM Journal on Computing \textbf{33} (2004)  563--583

\bibitem{AielloMRR05}
Aiello, W., Mansour, Y., Rajagopolan, S., Ros\'{e}n, A.:
\newblock Competitive queue policies for differentiated services.
\newblock Journal of Algorithms \textbf{55} (2005)  113--141

\bibitem{englert09lower}
Englert, M., Westermann, M.:
\newblock Lower and upper bounds on {FIFO} buffer management in {QoS} switches.
\newblock Algorithmica \textbf{53} (2009)  523--548

\bibitem{kesselman09competitive}
Kesselman, A., Patt-Shamir, B., Scalosub, G.:
\newblock Competitive buffer management with packet dependencies.
\newblock In: Proceedings of the 23rd IEEE International Parallel and
  Distributed Processing Symposium (IPDPS). (2009)

\bibitem{mansour11overflow}
Mansour, Y., Patt-Shamir, B., Rawitz, D.:
\newblock Overflow management with multipart packets.
\newblock In: INFOCOM. (2011)  2606--2614

\bibitem{AS+05}
Albers, S., Schmidt, M.:
\newblock On the performance of greedy algorithms in packet buffering.
\newblock SIAM Journal on Computing \textbf{35} (2005)  278--304

\bibitem{AR+06}
Azar, Y., Richter, Y.:
\newblock An improved algorithm for cioq switches.
\newblock ACM Transactions on algorithms \textbf{2} (2006)  282--295

\bibitem{AL+06}
Azar, Y., Litichevskey, A.:
\newblock Maximizing throughput in multi-queue switches.
\newblock Algorithmica \textbf{45} (2006)  69--90

\bibitem{KKM+08}
Kesselman, A., Kogan, K., Segal, M.:
\newblock Improved competitive performance bounds for cioq switches.
\newblock In: ESA. (2008)  577--588

\bibitem{KKM+10}
Kesselman, A., Kogan, K., Segal, M.:
\newblock Packet mode and qos algorithms for buffered crossbar switches with
  fifo queuing.
\newblock Distributed Computing \textbf{23} (2010)  163--175

\bibitem{AJ+10}
Albers, S., Jacobs, T.:
\newblock An experimental study of new and known online packet buffering
  algorithms.
\newblock Algorithmica \textbf{57} (2010)  725--746

\bibitem{scharge68proof}
Schrage, L.:
\newblock A proof of the optimality of the shortest remaining processing time
  discipline.
\newblock Operations Research \textbf{16} (1968)  687--690

\bibitem{LR+97}
Leonardi, S., Raz, D.:
\newblock Approximating total flow time on parallel machines.
\newblock In: STOC. (1997)  110--119

\bibitem{MRSG+05}
Muthukrishnan, S., Rajaraman, R., Shaheen, A., Gehrke, J.E.:
\newblock Online scheduling to minimize average stretch.
\newblock SIAM Journal on Computing \textbf{34} (2005)  433--452

\bibitem{MPT+94}
Motwani, R., Phillips, S., Torng, E.:
\newblock Non-clairvoyant scheduling.
\newblock Theoretical Computer Science \textbf{130} (1994)  17--47

\bibitem{K+07}
Pruhs, K.:
\newblock Competitive online scheduling for server systems.
\newblock SIGMETRICS Performance Evaluation Review \textbf{34} (2007)  52--58

\bibitem{brucker06jobshop}
Brucker, P., Heitmann, S., Hurink, J., Nieberg, T.:
\newblock Job-shop scheduling with limited capacity buffers.
\newblock OR Spectrum \textbf{28} (2006)  151--176

\bibitem{ruiz10hybrid}
Ruiz, R., V\'{a}zquez-Rodr\'{i}gue, J.A.:
\newblock The hybrid flow shop scheduling problem.
\newblock European Journal of Operational Research \textbf{205} (2010)  1--18

\bibitem{WP+03}
Wolf, T., Pappu, P., Franklin, M.A.:
\newblock Predictive scheduling of network processors.
\newblock Computer Networks \textbf{41} (2003)  601--621

\bibitem{ST+85}
Sleator, D.D., Tarjan, R.E.:
\newblock Amortized efficiency of list update and paging rules.
\newblock Communications of the ACM \textbf{28} (1985)  202--208

\bibitem{Borodin-ElYaniv}
Borodin, A., El-Yaniv, R.:
\newblock Online Computation and Competitive Analysis.
\newblock Cambridge University Press (1998)

\end{thebibliography}

\newpage

\begin{appendix}
\section*{Appendix}
\label{sec:appendix}

\begin{proof}[of Theorem~\ref{thm:NPO-tight-bound}]
(\emph{1}) \emph{Lower bound}. Assume for simplicity that $B/C$ is an integer.
Consider the following set of arrivals. During the first time slot
arrive $B-C$ packets with maximal number of passes $k$. From the same time slot an iteration is started. During each iteration $C$ packets with $k$ processing cycles arrive. Since $\NPO$ is greedy it accepts all of them and its buffer is full. $OPT$
does not accept these packets. During the arrival phase of the next $kC$ time slots the buffer of $\NPO$ is full since it is non-push-out and implements FIFO order. During this time interval arrive $C$ packets with a single processing cycle each time slot. During each iteration $\OPT$ transmits $kC$ packets but $\NPO$ transmits only $C$ packets. In contrast to the other iterations during the last time slot of the last iteration a burst of $B$ packets arrives. So at the end of the arrival sequence both buffers are full. Thus, the competitiveness of $\NPO$ is at least $\frac{ikC+B}{iC+B}$, $i\geq 1$. Therefore, for sufficiently big value of $k$, $\NPO$ is at least $k$-competitive.

(\emph{2}) \emph{Upper bound}. Observe that $\NPO$ must fill up its buffer before it drops any packets.
Moreover, so long as the $\NPO$ buffer is not empty then after at most $k$ time steps $\NPO$
must transmit its HOL packet. This means that $\NPO$ is transmitting at a
rate of at least one packet every $k$ time steps, while $\OPT$ in the same
time interval transmitted at most $k$ packets.
Hence, the number of transmitted packets at time $t$ for $\NPO$ is at least
$\lfloor t/k\rfloor$ while $\OPT$ transmitted at most $t$ packets for a competitive ratio of $k$ so long as the $\NPO$ buffer did not become empty before $\OPT$'s did.

If, on the other hand, $\NPO$ empties its buffer first, this means there
were no packet arrivals since the $\NPO$ buffer went below the $B-1$
threshold at a time $t$. From that moment on $\NPO$ empties its buffer
transmitting thus at least $B-1$ packets, while $\OPT$ transmitted at most
$B$ packets.

So in total the number of packets transmitted by $\NPO$ is at least $ \left\lfloor\frac{t}{k}\right\rfloor+B-1$
while the total number of packets transmitted by $\OPT$ is $t + B$. Thus, for sufficiently long input sequences $\NPO$ is $k$-competitive.
\qed
\end{proof}

\begin{proof}[of Theorem~\ref{thm:lb1}]
\emph{Case 1. $k \ge B$.} In this case, the same hard instance works for both $\PO$ and $\LPO$.
Consider the following sequence of arriving packets: on step $1$, there arrives a packet with $B$ required work followed by a packet with a single required cycle;
on steps $2..B-2$, $B-2$ more packets with a single required processing cycle; on step $B-1$, $B$ packets with a single processing cycle, and then no packets until step $2B-1$, when the sequence is repeated.
Under this sequence of inputs, the queues will work as follows ($\# ALG$ denotes the number of packets processed by $ALG$).

\begin{tabular}{c|c|r|c|r|c}
$t$ & Arriving & $\IB^{\{\PO,\LPO\}}_t$ & \# $\{\PO,\LPO\}$ & $\IB^{\OPT}_t$ & \# $\OPT$ \\\hline
$1$ & $\p{1}$ $\p{B}$ & $\p{1}$ $\p{B}$ & $0$ & $\p{1}$ & $1$ \\
$2$ & $\p{1}$ & $\p{1}$ $\p{1}$ $\p{B-1}$ & $0$ & $\p{1}$ & $2$ \\
$3$ & $\p{1}$ & $\p{1}$ $\p{1}$ $\p{1}$ $\p{B-2}$ & $0$ & $\p{1}$ & $3$ \\
\ldots & & \ldots & & \ldots & \\
$B-2$ & $\p{1}$ & $\p{1}$ \ldots $\p{1}$ $\p{2}$ & $0$ & $\p{1}$ & $B-2$ \\
$B-1$ & $\p{1} \times B$ & $\p{1}$ \ldots $\p{1}$ $\p{1}$ & $1$ & $\p{1}$ \ldots $\p{1}$ $\p{1}$ & $B-1$ \\
\ldots & & \ldots & & \ldots & \\
$2B-1$ &  &  & $B$ &  & $2B-2$ \\
\end{tabular}

Thus, at the end of this sequence $\PO$ has processed $B$ packets, while $\OPT$ has processed $2B-2$,
and the sequence repeats itself, making this ratio asymptotic.

\emph{Case 2.1. $k < B$, algorithm $\PO$.} In this case, we need to refine the previous construction; for simplicity, assume that $k\ll B\to\infty$,
and everything divides everything.
\begin{enumerate}
\item On step~1, there arrive $(1-\alpha)B$ packets of required work $k$ followed by $\alpha B$ packets with required work $1$ ($\alpha$ is a constant to be determined
later). $\PO$ accepts all packets, while $\OPT$ rejects packets with required work $k$ and only accepts packets with required work $1$.
\item On step~$\alpha B$, $\OPT$'s queue becomes empty, while $\PO$ has processed $\frac{\alpha B}{k}$ packets, so it has $\frac{\alpha B}{k}$ free spaces in the queue.
Thus, there arrive $\frac{\alpha B}{k}$ new packets of required work $1$.
\item On step~$\alpha B(1+\frac1k)$, $\OPT$'s queue is empty again, and there arrive $\frac{\alpha B}{k^2}$ new packets of required work $1$.
\item ...
\item When $\PO$ is out of packets with $k$ processing cycles, its queue is full of packets with $1$ processing cycle, and $\OPT$'s queue is empty. At this point, there arrive $B$ new
packets with a single processing cycle, they are processed, and the entire sequence is repeated.
\end{enumerate}
In order for this sequence to work, we need to have
$$\alpha B\left(1+\frac1k+\frac1{k^2}+\ldots\right) = k\left(1-\alpha\right)B.$$
Solving for $\alpha$, we get $\alpha = 1-\frac 1k$. During the sequence, $\OPT$ has processed $\alpha B\left(1+\frac1k+\frac1{k^2}+\ldots\right) + B=2B$ packets,
while $\PO$ has processed $\left(1-\alpha\right)B+B=\left(1+\frac1k\right)B$ packets, so the competitive ratio is $\frac{2}{1+\frac1k}$.
Note that the two competitive ratios, $\frac{2}{1+\frac1k}$ and $2\left(1-\frac1B\right)$, match when $k=B-1$.

\emph{Case 2.2. $k < B$, algorithm $\LPO$.} In this case, we can use an example similar to the previous one, but simpler since there is no extra profit to be had from an iterative construction.
\begin{enumerate}
\item On step~1, there arrive $(1-\alpha)B$ packets with $k$ processing cycles followed by $\alpha B$ packets with a single processing cycle ($\alpha$ is a constant to be determined later).
$\LPO$ accepts all packets, while $\OPT$ rejects packets with required work of $k$ and only accepts packets with a single processing cycle.
\item On step~$\alpha B$, $\OPT$'s queue becomes empty, while $\PO$ has processed $\frac{\alpha B}{k}$ packets, so it has $\frac{\alpha B}{k}$ free spaces in the queue.
There arrive $\beta B$ new packets of required work $1$.
\item On step~$\left(\alpha+\beta\right) B$, $\OPT$'s queue is empty again, and $\LPO$'s queue consists of $B$ packets with required work $1$.
At this point, there arrive $B$ new packets with required work $1$, they are processed, and the entire sequence is repeated.
\end{enumerate}

In order for this sequence to work, we need to have
$$\left(\beta + \frac{\alpha+\beta}{k-1}\right) B = \left(1-\alpha\right)B,$$
and $\OPT$ has processed $(\alpha+\beta)B$ extra packets, and from this equation we get $\alpha+\beta = \left(1+\frac1{k-1}\right)^{-1}$.
During the sequence, $\OPT$ has processed $B\left(1+\left(1+\frac1{k-1}\right)^{-1}\right)$ packets, and $\LPO$ has processed $B$ packets,
yielding the necessary bound.
\qed\end{proof}

\begin{proof}[of Theorem~\ref{thm:po-lower-general}]
We proceed by induction on $B$. For the induction base, we begin with the basic construction that works for $k=\Omega(B^2)$.

\begin{lemma}
\label{thm:po-lower-three}
For $k\ge (B-1)(B-2)$, the competitive ratio of $\PO$ is at least $\frac{3B}{B+1}$;
for $\LPO$, the competitive ratio is at least exactly $3$.
\end{lemma}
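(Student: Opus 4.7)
My plan is to extend the two-scale lower-bound construction of Theorem~\ref{thm:lb1} to a three-scale construction. The intuition is that while $\PO$ or $\LPO$ holds a packet of required work $k$ in its buffer, the adversary has a window of roughly $k$ time slots during which it can run a nested two-level adversarial subsequence at a smaller scale; with $k \ge (B-1)(B-2)$, there is exactly enough time to embed such a two-level construction whose own time scale is $\Theta(B)$.

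Concretely, I would send three bursts at required-work levels $k$, $m \approx B$, and $1$. The initial burst at $t=1$ fills $\PO$/$\LPO$'s buffer with packets of required work $k$; $\OPT$ rejects these. A middle phase then runs the two-level construction at scale $m$: a burst of work-$m$ packets followed, after a suitable delay, by a burst of work-$1$ packets. Each later arrival pushes out a residual work-$k$ packet in $\PO$/$\LPO$'s buffer (the pushout rule of $\PO$ always targets the packet with maximal residual work), while $\OPT$, being clairvoyant, accepts and processes them normally. A final burst of $B$ work-$1$ packets lets $\OPT$ close out with $B$ more transmissions. Over the whole sequence, $\OPT$ gains $3B$ transmissions while $\PO/\LPO$ transmit roughly only $B$ packets each, and since both algorithms end with empty buffers the construction repeats to give an asymptotic bound.

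For $\LPO$ the argument is clean: $\LPO$ transmits nothing during an iteration and ends it with a buffer of $B$ work-$1$ packets, so it transmits exactly $B$ packets against $\OPT$'s $3B$, giving a ratio of exactly $3$. For $\PO$, because it processes continuously, it manages to complete one extra packet (its original head-of-line work-$k$ packet) before the sequence ends, transmitting $B+1$ packets in total and yielding the ratio $\frac{3B}{B+1}$; the condition $k \ge (B-1)(B-2)$ is precisely what ensures $\PO$ completes no more than that single high-work packet during the sequence. The main obstacle in the proof will be the careful bookkeeping of arrival times, pushout targets, and $\OPT$'s work-conserving schedule, in order to verify that (i) every pushout in the second and third bursts evicts one of the remaining work-$k$ packets rather than any newly arrived smaller-work packet, (ii) the total processing invested by $\PO$ during the sequence equals exactly $k$ (pinning down the $B+1$ count), and (iii) $\OPT$ can indeed fit exactly $3B$ transmissions into the time window defined by $k \ge (B-1)(B-2)$.
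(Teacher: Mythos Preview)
Your three-level idea is in the right spirit, but the paper's construction is not a uniform burst at each level. It begins with a \emph{single} packet of work $(B-1)(B-2)$ at the head, followed by a staircase $B-1,B-2,\dots,1$ behind it. The staircase is the key device. First, it lets $\OPT$ (which rejects only the big packet) clear $B-1$ packets using total work $\tfrac{(B-1)B}{2}\le (B-1)(B-2)$ for $B\ge 4$, so the threshold in the lemma is exactly what is needed. Second, once $\PO$ finishes the big packet its buffer reads $B-1,B-2,\dots,1$ from head to tail; when work-$1$ packets now trickle in one per step, the head is always (tied for) maximal residual and is therefore the packet pushed out, so $\PO$ completes nothing further in this phase.

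Your uniform scheme misses both effects. If you fill the buffer with $B$ copies of work $k$ and then send work-$m$ packets, only the $B-1$ non-HOL copies get evicted; the subsequent work-$1$ arrivals therefore do \emph{not} evict ``a residual work-$k$ packet'' as you write---by then only the HOL copy of $k$ remains, and each work-$1$ must instead evict a work-$m$ packet. For that to happen without evicting the HOL itself you need its residual to have fallen strictly below $m$ throughout all $B-1$ work-$1$ arrivals, which forces $m\ge B$; but then $\OPT$'s middle-phase workload is at least $(B-1)(m+1)\ge (B-1)(B+1)=B^2-1$, so your construction actually requires $k\ge B^2-1$, strictly stronger than $(B-1)(B-2)$. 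You would prove a correct but weaker statement; the staircase is precisely what buys the stated threshold.
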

\begin{proof}
This time, we begin with the following buffer state:
$$\p{1}\ \p{2}\ \p{3}\ \p{4}\ \ldots\ \p{B-1}\ \p{(B-1)(B-2)}.$$
Over the next $(B-1)(B-2)$ steps, $\PO$ ($\LPO$) keeps processing the first packet, while $\OPT$, dropping the first packet, processes all the rest
(their sizes sum up to the size of the first one). Thus, after $(B-1)(B-2)$ steps $\OPT$'s queue is empty, and $\PO$'s ($\LPO$'s) queue looks like
$$\p{\ \vphantom{C} }\ \p{1}\ \p{2}\ \p{3}\ \ldots\ \p{B-1}.$$
Over the next $B$ steps, $B$ packets of size $1$ arrive in the system. On each step, $\PO$ ($\LPO$) drops the packet from the head of the queue since
it is the largest one, while $\OPT$ keeps processing packets as they arrive.

Thus, at the end of $(B-1)(B-2) + B$ steps, $\PO$ ($\LPO$) has a queue full of $\p{1}$'s and OPT has an empty queue; moreover, $\PO$ ($\LPO$) has processed only one packet (zero packets), while OPT has processed $2B$ packets. Now, for the case of unlimited size incoming burst we have $B$ packets of size $1$ arriving, and after that they are processed and the sequence is repeated, so $\PO$ ($\LPO$) processes $B+1$ packets ($B$ packets) and $\OPT$ processes $3B$ packets per iteration.
\qed
\end{proof}

If $k$ grows further, we can iterate upon this construction to get better bounds.
For the induction step, suppose that we have already proven a lower bound of $n - O(\frac1B)$ ,
and the construction requires maximal required work per packet less than $S=\Omega(B^{n-1})$.

Let us now use the construction from Lemma~\ref{thm:po-lower-three}, but add $S$ to every packet's required work and, consequently, $S(B-1)$ to the first packet's required work:
$$\p{1+S}\ \p{2+S}\ \p{3+S}\ \p{4+S}\ \ldots\ \p{B-1+S}\ \p{(B-1)(B-2+S)}.$$
At first (for the first $(B-1)(B-2+S)$ steps), this works exactly like the previous construction: $\OPT$ processes all packets except the first
while $\PO$ ($\LPO$) is processing the first packet. After that, $\OPT$'s queue is empty, and $\PO$'s ($\LPO$'s) queue is
$$\p{\ \vphantom{S}}\ \p{1+S}\ \p{2+S}\ \p{3+S}\ \ldots\ \p{B-1+S}.$$
Now we can add packets from the previous construction (one by one in the unit-size burst case or all at once),
and $\OPT$ will just take them into its queue, while $\PO$ ($\LPO$) will replace all existing packets from its queue with new ones.
Thus, we arrive at the beginning of the previous construction, but this time, $\PO$ ($\LPO$) has already processed one packet
and $\OPT$ has already processed $B-1$ packets.

This completes the proof of Theorem~\ref{thm:po-lower-general}.\qed
\end{proof}

\end{appendix}

\end{document}